\documentclass{amsart}
\usepackage{graphicx} 
\usepackage{url,multirow}
\usepackage{amsmath, amsthm, amsfonts, enumerate}
\usepackage{comment}
\usepackage{url,multirow}
\newtheorem{example}{Example}
\newtheorem{definition}{Definition}
\newtheorem{observation}{Observation}

\newtheorem{proposition}{Proposition}

\begin{document}
\author{David McCune}

\address{David McCune, Department of Mathematics and Data Science, William Jewell College, 500 College Hill, Liberty, MO, 64068}
\email{mccuned@william.jewell.edu} 

\author{Jennifer Wilson}
\address{Jennifer Wilson, The New School, Department of Natural Sciences and Mathematics, 66 West 12th Street New York, NY 10011}
\email{wilsonj@newschool.edu}

\title[IRV and the Reinforcement Paradox]{Instant Runoff Voting and the Reinforcement Paradox}

\begin{abstract}
We analyze the susceptibility of instant runoff voting (IRV) to a lesser-studied paradox known as a \emph{reinforcement paradox}, which occurs when candidate $X$ wins under IRV in two distinct elections but $X$ loses in the combined election formed by merging the ballots from the two  elections. For three-candidate IRV elections we provide necessary and sufficient conditions under which there exists a partition of the ballot set into two sets of ballots such that a given losing candidate wins each of the sub-elections. Applying these conditions, we use Monte Carlo simulations to estimate the frequency with which such partitions exist under various models of voter behavior.  We also analyze the frequency with which the paradox occurs in a large dataset of real-world ranked-choice elections to provide empirical probabilities. Our general finding is that IRV is highly susceptible to this paradox in three-candidate elections.
\end{abstract}

\keywords{instant runoff voting, ranked choice voting, reinforcement paradox, consistency, empirical results}

\maketitle

\section{Introduction}

The voting method of instant runoff voting (IRV), also referred to as the alternative vote, the plurality elimination rule, and colloquially as ``ranked-choice voting'' in the United States, is famously susceptible to many classical voting paradoxes \cite{FB83,N99,S95}. In this article we analyze the susceptibility of IRV to a lesser-studied paradox known as a \emph{reinforcement paradox}, which occurs when candidate $X$ wins under IRV in two distinct elections but $X$ loses in the combined election formed by merging the ballots from the two  elections. For three-candidate IRV elections we provide necessary and sufficient conditions under which there exists a partition of the ballot set into two subsets of ballots such that a given losing candidate wins each of the sub-elections. Applying these conditions, we use Monte Carlo simulations to estimate the frequency with which such partitions exist under various models of voter behavior.  We then analyze the frequency of the paradox in a large dataset of real-world ranked-choice elections to provide empirical probabilities. By characterizing the three-candidate IRV elections that demonstrate susceptibility to the reinforcement paradox and estimating its frequency, we view our work as a contribution to the larger agenda laid out by Felsenthal in \cite{F19}.

Reinforcement paradoxes have received attention in the social choice literature due to their interesting and surprising nature, much like other paradoxes to which IRV is susceptible, such as upward or downward monotonicity paradoxes. However, unlike monotonicity paradoxes, IRV's susceptibility to reinforcement paradoxes has direct administrative implications: to determine the IRV winner, it is not sufficient to know the winner (either plurality or IRV) at each precinct.   Instead, complete information about the numbers of ballots with each different ranking must be  transmitted—either electronically or physically—to a central counting location and collated before the IRV winner can be determined. By contrast, a  voting method  like plurality (or a scoring rule) allows each precinct to simply report the number of first-place  (and possible second-place, etc.) votes for each candidate, and for the winner to be determined by summing these results across precincts. As a result, IRV ballot tabulation and the subsequent determination of the winner is of necessity more complicated  than with methods like plurality. Thus, a study of reinforcement paradoxes is motivated by practical as well as theoretical considerations.

Our work in this paper connects to three previous strands of the literature concerning IRV and voting paradoxes. The first gives conditions on a preference profile which characterize when an election is susceptible to a given paradox. Examples are  Lepelley et al. \cite{LCB96}, Miller \cite{Mi17}, and Ornstein and Norman \cite{ON14}, which characterize the three-candidate profiles which demonstrate an upward or downward monotonicity paradox under IRV. Similarly, Graham-Squire \cite{GS24} characterizes three-candidate profiles which demonstrate a no-show paradox under IRV, including profiles which contain partial ballots. Kamwa et al. \cite{KMT23} study a variety of voting rules, but in particular provide conditions under which three-candidate profiles produce paradoxes such as negative participation paradoxes under IRV.

The second strand of literature typically uses the conditions given by the works cited above to estimate the frequency with which a given paradox occurs under IRV. For example, Ornstein and Norman \cite{ON14} estimate the frequency of monotonicity failures in IRV elections when the ballot data is generated using a spatial model. Similarly, Lepelley et al. \cite{LCB96} and Miller \cite{Mi17} estimate the frequency with which IRV produces monotonicity paradoxes under various theoretical models of voter behavior. Lepelley and Merlin \cite{LM01} analyze the frequency with which IRV demonstrates a no-show paradox in three-candidate elections under the impartial culture and impartial anonymous culture models. Kamwa et al. \cite{KMT23} analyze the probability that IRV (among other methods) demonstrates no-show or negative participation paradoxes on single-peaked domains. McCune and Wilson \cite{MW24} examine the susceptibility of IRV to negative participation paradoxes. 

More broadly, there is a large literature examining the extent to which various voting rules are susceptible to strategic manipulation. Such work need not rely on algebraic characterizations involving the vote totals in the preference profile; see \cite{AK99,FL06,GA,K93} for such examples.

The third strand is the expanding empirical social choice literature on IRV, enabled by the increasing availability of ranked-choice ballot data from jurisdictions such as Australia \cite{S24}, the United States \cite{O22}, and Scotland \cite{MGS24}.  Some of this literature concerns the empirical probability that IRV produces a particular kind of paradox. Graham-Squire and Zayatz \cite{GSZ20}, Graham-Squire and McCune \cite{GSM23}, and McCune and Graham-Squire \cite{MGS24} estimate the frequency with which real-world IRV elections demonstrate monotonicity and no-show paradoxes. Other empirical work includes case studies of particularly interesting elections which demonstrate a range of different paradoxes; see \cite{LS09, MM23}, for example. Beyond work on paradoxes, empirical research on IRV has analyzed rates of ballot errors in IRV elections \cite{C24, PR23}, the extent of ballot exhaustion \cite{BK15, GSM23}, the rate at which IRV exhibits the so-called ``spoiler effect'' \cite{MW23}, the extent to which ballot truncation can affect the IRV winner \cite{DMM, KGF, TUK}, and the frequency with which IRV selects the Condorcet winner \cite{GSM23,S23, S24}. There is also a growing, related empirical literature concerning the method of the single transferable vote (STV), a multiwinner version of IRV which is designed to achieve proportional representation given a set of preference ballots. This literature also investigates paradoxical and Condorcet-related outcomes \cite{MGS24, MMLS24}, as well as STV's propensity to achieve proportional representation \cite{BDDW24,BBMP24, M23}.

Of course, there is a much broader literature concerning paradoxes and voting methods other than IRV, and we do not attempt to survey that literature here. See \cite{D14, M88, P01} for an examination of the susceptibility of Condorcet consistent methods to various types of no-show paradox, for example.

Previous work on the reinforcement paradox is fairly limited. The notion of a reinforcement paradox dates to the work of Young \cite{Y74, Y75}, who defined the concept and  proved essentially that the only reasonable voting methods not susceptible to the paradox are positional scoring rules. Felsenthal and Nurmi \cite{FN19} show that various Condorcet inconsistent voting rules (including IRV) are susceptible to the paradox even in ``tame'' profiles in which a Condorcet winner exists and the voting rule chooses this candidate. Brandt et al. \cite{BMS22} provide some ``minimal'' examples of the reinforcement paradox under IRV where the number of candidates and voters is small. Brandt et al. \cite{BDP24} analyze the frequency of reinforcement paradoxes under various Condorcet consistent voting rules. We are aware of only two previous papers which attempt to measure the frequency with which IRV produces reinforcement paradoxes, both of which focus on the three-candidate case. Plassmann and Tideman \cite{PT14} measure this frequency under a spatial model, and Courtin et al. \cite{CMMS10} study the paradox under the impartial culture and impartial anonymous culture models of voter behavior. Our work differs from these studies in significant aspects which we explicate more fully below. In brief, the approach taken by these studies is to generate two elections and then say that the reinforcement paradox has occurred if the two elections have the same winner, but when the ballots from the two elections are combined, the winner is different. By contrast, we consider only one election and say that the paradox has occurred in this election if there exists a way to partition the ballots into two sub-elections where the winners of the sub-elections are the same but differ from the winner of the original election. Our work is complementary to theirs and allows us to focus  on one set of ballots and the characteristics required for it to demonstrate a susceptibility to this kind of inconsistency.

Another reason for our interest in this approach is that it parallels a well-established line of research regarding self-consistency in   allocation rules. For instance, Aumann and Maschler \cite{AM85} identify  \emph{consistency} (in which the outcome of  dividing assets among a set of agents doesn't change when applied to a subset of agents) as a crucial  characteristic of claims methods.  Balinski and Young \cite{BY01} define an analogous notion of  \emph{uniformity} to characterize divisor methods of apportionment. Converse consistency and other related ideas also appear in the cooperative game and bargaining literatures; see \cite{T11} for a survey.

Consistency in allocation rules is inherently different from consistency in voting rules:  consistency in allocation rules requires that if an allocation is recalculated after  a subset of agents ``departs'' with their share, the allocation does not change. Put another way, if the agents are partitioned into two subsets and their shared goods are pooled and then reallocated, the results are unchanged. 
We can adapt this latter interpretation to the social choice framework: if an election is partitioned into two subsets and the winner is determined in each of the subsets, the results should reflect the original winner.

The paper is structured as follows. Section \ref{prelims} gives preliminaries, including formal definitions of IRV and the reinforcement paradox. In Section \ref{section:conditions} we provide necessary and sufficient conditions for a three-candidate election to demonstrate a reinforcement paradox and discuss some of the consequences of these conditions. Proposition \ref{complete_cond} gives these conditions assuming all voters provide a complete ranking of the candidates while Proposition \ref{partial_cond} gives conditions when some voters cast partial ballots. Section \ref{threshold_results} considers the minimum percentage a losing candidate can earn and yet the election demonstrates a reinforcement paradox. Section \ref{models_section} estimates the probability that an election demonstrates a paradox under various theoretical models of voter behavior. Section \ref{empirical} analyzes real-world IRV elections from the United States and Scotland. This section provides empirical probabilities using a large dataset of elections, and provides a probabilistic analysis using non-parametric bootstrapping. We also provide results concerning geographic partitions of the electorate for the elections for which we have geographic information. Section \ref{section:discussion} gives a discussion of our results, and Section \ref{conclusion} concludes.

\section{Preliminaries}\label{prelims}

In this section we define IRV and different types of reinforcement paradoxes, illustrated with examples.

 In an election which uses IRV, voters cast preference ballots with a (possibly partial) linear ranking of the candidates. After an election, the ballots are aggregated into a preference profile, which shows the number of each type of ballot cast. For example, Table \ref{District_E} shows a preference profile involving the three candidates $C$, $G$, and $H$. The number 2211 denotes that 2211 voters ranked $G$ first, $H$ second, and $C$ third. In the real-world IRV elections we analyze in Section \ref{empirical}, voters are allowed to provide partial rankings. Table \ref{District_E} has three columns where voters do not provide a complete ranking--2003 voters rank only $G$ on their ballots, for example.

An IRV election proceeds in rounds where, in each round, the number of first-place votes for each candidate is calculated. If a candidate has a majority of the (remaining) first-place votes, they are declared the winner. If no candidate receives a majority of votes, the candidate with the fewest first-place votes is eliminated and the names of the remaining candidates on the affected ballots are moved up. The process repeats until a candidate is declared the winner. In the case when partial ballots are allowed, any ballot in which all candidates have been eliminated is considered ``exhausted'' and plays no further role in the determination of the winner.\footnote{In theory there are many ways we could treat exhausted ballots besides removing them from the election. However, as far as we are aware, any actual election office treats exhausted ballots in this fashion.}

When there are only three candidates, at most two rounds are required to determine a winner; in general, several rounds may be required. We focus only on the three-candidate case and thus do not consider more than two rounds of ballot counting. The following example demonstrates the IRV algorithm using the 2022 Alaska Senate District E election in Alaska. 

\begin{table} 
\begin{tabular}{l|ccccccccc}
Num. Voters & 2211 & 1439 &2003 & 2605& 241 &2689&2230&418&2873\\
\hline
1st choice & $G$ & $G$&$G$&$H$&$H$&$H$&$C$&$C$&$C$\\
2nd choice & $H$ & $C$& & $G$ & $C$& & $G$ & $H$\\
3rd choice & $C$ & $H$& & $C$ & $G$&&$H$ & $G$ \\
\end{tabular}

\caption{Vote totals in the 2022 Senate District E election in Alaska.}
\label{District_E}
\end{table}

\begin{example}
(2022 Alaska Senate District E.) This election contained the three candidates Roselynn Cacy (C), Cathy Giessel (G), and Roger Holland (H). The preference profile is displayed in Table \ref{District_E}.\footnote{Our vote totals differ from the officially reported vote totals by up to three votes. This minor discrepancy could be due to a recount which occurred after the election, or because the Alaska Division of Elections has not been completely clear about how it processes ``edge case'' ballots. Our analysis is unaffected.}

Initially, the vote totals for Cacy, Giessel, and Holland are 5521, 5653, and 5535, respectively. Cacy has the fewest votes and is eliminated from the election. As a result, 2230 votes are transferred to Giessel and 418 to Holland (the 2873 ballots ranking only Cacy are transferred to no one), and Giessel wins in the last round 7883 votes to 5953.
\end{example}

While Giessel is the winner, this election has the strange property that Cacy can point to two disjoint groups of voters such that Cacy would win under IRV if we use only the ballots from each group. Similarly, Holland can also break the electorate into two disjoint groups such that Holland wins when using the ballots from each group separately. We illustrate this paradoxical behavior in the next example.

\begin{example}\label{first_paradox_example}
    In the 2022 Senate District E election in Alaska, suppose we partition the ballots into two ballot sets $\mathcal{B}_1$ and $\mathcal{B}_2$ as shown for Partition 1 in Table \ref{District_E_partitions}. If we use just the ballots from $\mathcal{B}_1$, $H$ wins under IRV, as $C$ is eliminated first and enough votes are transferred to $H$ for him to win against $G$ in the final round. Similarly, if we use only the ballots from $\mathcal{B}_2$ then $H$ also wins under IRV, as in this case $G$ is eliminated first and enough votes are transferred to $H$ for him to win against $C$ in the final round. Thus, Roger Holland is able to point to two disjoint sub-electorates in which he would win the election using just the ballots from each sub-electorate, yet when we merge the ballots from the sub-electorates to recover the original ballot data, Holland loses.

    Table \ref{District_E_partitions} provides a second partition of the ballots, Partition 2, in which $C$ wins under IRV when using only the ballots from $\mathcal{B}_1$ and only the ballots from $\mathcal{B}_2$, yet when we merge the ballots to recover the original election data, $C$ loses. For both Cacy and Holland, in this election win $+$ win $=$ loss in some sense.
\end{example}

\begin{table}
\centering

\begin{tabular}{ll|ccccccccc}
\multicolumn{2}{c|}{Num. Voters} 
& 2230 & 418 & 2873 & 2211 & 1439 & 2003 & 2605 & 241 & 2689 \\
\hline
\multicolumn{2}{c|}{1st choice}  
& $C$ & $C$ & $C$ & $G$ & $G$ & $G$ & $H$ & $H$ & $H$ \\
\multicolumn{2}{c|}{2nd choice}  
& $G$ & $H$ &  & $H$ & $C$ &  & $G$ & $C$ &  \\
\multicolumn{2}{c|}{3rd choice}  
& $H$ & $G$ &  & $C$ & $H$ &  & $C$ & $G$ &  \\
\hline
\hline
\multirow{2}{*}{Partition 1} & $\mathcal{B}_1$ 
& 111 & 233 & 142 & 355 & 137 & 132 & 146 & 119 & 239 \\
& $\mathcal{B}_2$ 
& 2119 & 185 & 2731 & 1856 & 1302 & 1871 & 2459 & 122 & 2450 \\
\hline
\multirow{2}{*}{Partition 2} & $\mathcal{B}_1$ 
& 60 & 90 & 1900 & 200 & 1300 & 500 & 2580 & 11 & 59 \\
& $\mathcal{B}_2$ 
& 2170 & 328 & 973 & 2011 & 139 & 1503 & 25 & 230 & 2630 \\
\end{tabular}

\caption{Two partitions of the ballots in the 2022 Senate District E election in Alaska with no zero entries. In Partition 1, candidate $H$ wins both sub-elections; in Partition 2, candidate $C$ wins both sub-elections; yet candidate $G$ wins the overall election.}
\label{District_E_partitions}
\end{table}

Example \ref{first_paradox_example} demonstrates the notion of a \emph{reinforcement paradox}, the subject of this article. The study of this paradox dates to the work of Young \cite{Y74}. Young defined a voting method as a function $f$ which takes as input a set of ranked ballots and which outputs an election winner (or a set of winners, if there is some sort of tie).  We use a simplified version of Young's definition which does not address the issue of tied outcomes because ties almost never arise in the elections we study. In particular, none of our real-world elections have a tied outcome in any round. If $\mathcal{B}$ is a set of ballots  and $f(\mathcal{B})$ is the winner under voting method $f$, then Young defined the following notion of \emph{consistency} \cite{Y74}.

\begin{definition}\label{definition_consistency}
Let $\mathcal{B}_1$ and $\mathcal{B}_2$ be two sets of ballots over the same candidate set. A voting method $f$ is said to be \emph{consistent} if whenever $f(\mathcal{B}_1)=f(\mathcal{B}_2)$ then $f(\mathcal{B}_1\cup\mathcal{B}_2)=f(\mathcal{B}_1)$.
\end{definition}

A voting method that is not consistent is said to be susceptible to the \emph{reinforcement paradox}. This paradox has other names, such as the ``multiple districts paradox'' \cite{FB83}, and is a social choice analogue of the classical Simpson's Paradox from statistics. In \cite{Y75}, Young showed that consistency and a few other uncontroversial axioms completely characterize positional scoring rules. Thus, any ``reasonable'' voting method which is not a scoring rule is not consistent. We analyze only IRV and the function $f$  represents that voting method throughout the article.

To investigate the frequency with which reinforcement paradoxes occur, we must first decide what constitutes an ``occurrence'' or an ``instance'' of the paradox. There seem to be two viable approaches. The first, used by Courtin et al. \cite{CMMS10} and Plassman and Tideman \cite{PT14}, focuses on the ballot sets $\mathcal{B}_1$ and $\mathcal{B}_2$. Their work generates  ballot sets $\mathcal{B}_1$ and $\mathcal{B}_2$ independently using some probability distribution, and the paradox is said to occur if the election winner using $\mathcal{B}_1$ is the election winner using $\mathcal{B}_2$, but we obtain a different winner for the set $\mathcal{B}_1\cup \mathcal{B}_2$. A second approach, which is the one taken in this paper, is to start with a ballot set $\mathcal{B}$ and determine if it is possible to  decompose into two subsets $\mathcal{B}_1$ and $\mathcal{B}_2$ where the winners of $\mathcal{B}_1$ and $\mathcal{B}_2$ match but differ from the winner of the original election. That is,  we  select a single ballot set $\mathcal{B}$ using some probability distribution (or drawing from a set of real-world elections), and say this single election is susceptible to the reinforcement paradox if  $\mathcal{B}$ can be partitioned into two sub-elections $\mathcal{B}_1$ and $\mathcal{B}_2$ such that  $f(\mathcal{B}_1) = f(\mathcal{B}_2)$ but that the winner of the two sub-elections is different from $f(\mathcal{B}).$ 

In this analysis, we make no restrictions on the partitions. Indeed, our methodology involves finding necessary and sufficient conditions for such a partition to exist, without explicitly manufacturing one. Thus, if an election is susceptible to a reinforcement paradox in this sense, we do not  know \emph{a priori} anything about $\mathcal{B}_1$ or $\mathcal{B}_2$. The sub-elections may look ``similar'' to the original election, or look highly manufactured; there could be many possible partitions or few; the partitions may be more or less ``sensitive'' to small ballot perturbations. 

A consequence of this approach is that there may be many elections (either simulated or real-world) which are identified as susceptible to the paradox in which the ``paradoxical harm'' felt by the losing candidate who wins both sub-elections is highly theoretical. This is inevitable in the case of simulated elections in which the  models include no information about the voters (demographic, geographical, etc.) aside from their candidate preferences. It is also natural in the case of actual election data where preference profiles frequently contain no geographical information. 
Of course, instances of the paradox corresponding to partitions in which $\mathcal{B}_1$ and $\mathcal{B}_2$ represent some kind of meaningful split of the electorate may be regarded as more serious--the losing candidate who wins in these sub-elections might feel a more coherent sense of harm. 
We discuss this further at the end of the section.

Overall, we view our approach as complementary to that in Courtin et al. \cite{CMMS10} and Plassman and Tideman \cite{PT14}. We adopt our methodology  for the following reasons. First, we are motivated by a study of real-world elections (see Section \ref{empirical}). Given a single actual election, if we focus on decomposing one ballot set into a partition then we can meaningfully ask ``Did the 2022 Alaska Special House Election demonstrate susceptibility to a reinforcement paradox?''. An answer of Yes then means that a partition of the ballot set exists.  For real-world elections in which only preference profile information is available, aruably the most natural way to search for reinforcement failures is to consider all possible partitions of the electorate.

Second, this approach is in line with notions of ``occurrence'' for other paradoxes such as monotonicity paradoxes. For example, informally an election is said to demonstrate an upward monotonicity paradox if there exists a set of voters who could turn the winning candidate into a loser by shifting the winner to higher rankings on their ballots \cite{GSZ20, GSM23}. That is, given a single election we demonstrate this paradox by finding a subset of ballots which, when suitably changed, produces a paradoxical outcome. Our approach with reinforcement paradoxes has a similar flavor: while we do not change the ballots, we hunt for subsets of ballots which produce a paradoxical outcome. 

Third, much of the interesting work for the  approach of Courtin et al.  and Plassman and Tideman  has already been covered in \cite{CMMS10,PT14}, but the second approach has not been well-studied. Brandt et al. \cite{BMS22} give results about ``minimal'' reinforcement paradoxes arguably using the second approach; for example, for a fixed number of candidates and for various voting methods (including IRV), they give a profile with a minimal number of voters such that the preference profile can be partitioned into two ballots sets which demonstrate the paradox. But otherwise we are unaware of previous results about partitioning a single ballot set.
There is much interesting mathematics to explore in the second approach, such as providing necessary and sufficient conditions for an election to admit an appropriate partition of the ballot set. (We do exactly this, for three candidates, in   Section \ref{section:conditions}.)
For these reasons, we adopt the following definition for an election to demonstrate a reinforcement paradox.

\begin{definition}
An election with ballot set $\mathcal{B}$ is susceptible to (or ``demonstrates'') a \textbf{reinforcement paradox} under a voting method $f$ if there exists a partition of the ballot set $\mathcal{B}=\mathcal{B}_1 \cup \mathcal{B}_2$ such that $f(\mathcal{B}_1)=f(\mathcal{B}_2)$ and $f(\mathcal{B}_1)\neq f(\mathcal{B}_1\cup\mathcal{B}_2)$.
\end{definition}

Under IRV, if an election is susceptible to  a reinforcement paradox then the paradox's dynamics are similar to that of Example \ref{first_paradox_example}, in the sense that the elimination order in the two sub-elections are different. It is straightforward to show that if the IRV election using ballots from $\mathcal{B}_1$ has the same candidate elimination order as the IRV election using ballots $\mathcal{B}_2$ then the winner when using $\mathcal{B}_1\cup\mathcal{B}_2$ is necessarily the winner of the two sub-elections. Thus, when searching for a partition to demonstrate the paradox with three candidates, we look for sub-elections in which the two candidates in the final round of the election using $\mathcal{B}_1$ are not the same as the two candidates in the final round of the election using $\mathcal{B}_2$.

When there are three candidates, an election can demonstrate a reinforcement paradox in two different ways, leading to the two definitions below. We  refer to the candidate with the smallest number of first-place votes as the \emph{plurality loser}. This candidate can never be the winner in an IRV election since they are eliminated in the first round. We refer to the remaining losing candidate,  (who may receive either the largest or second-largest number of first-place votes) as the \emph{runner-up}.  

\begin{definition}
An election is susceptible to a \textbf{plurality loser reinforcement paradox}
if there exists a partition of the ballots $\mathcal{B}=\mathcal{B}_1\cup \mathcal{B}_2$ such that the plurality loser is the IRV winner when using only the ballots in $\mathcal{B}_1$ and when using only the ballots in $\mathcal{B}_2$.
\end{definition}

\begin{definition}
An election is susceptible to a \textbf{runner-up reinforcement paradox}
if there exists a partition of the ballots $\mathcal{B}=\mathcal{B}_1\cup \mathcal{B}_2$ such that the runner-up is the IRV winner when using only the ballots in $\mathcal{B}_1$ and when using only the ballots in $\mathcal{B}_2$.
\end{definition}

If an election is susceptible to both a plurality loser and runner-up paradox, we say the election is susceptible to a \textbf{double reinforcement paradox}.
The next section considers the problem of determining if a ballot set $\mathcal{B}$ in a three-candidate election can be partitioned to produce a runner-up or plurality loser reinforcement paradox.

To conclude this section, we elaborate further on what factors might make  some instances of the reinforcement paradox  arguably more serious than others. When motivating the study of reinforcement paradoxes, the classic story is as follows. \emph{Suppose candidate $A$ wins when using ballots only from the northern part of a city and wins using only ballots from the southern part, but $A$ loses when we combine the ballots. Does this not seem paradoxical or unfair in some way?} In this story the partition of the ballot set is ``natural'' in the sense that voters associated with ballots from $\mathcal{B}_1$ have some salient trait in common (in this case residence in the north), as do voters associated with ballots from $\mathcal{B}_2$. If voters can be partitioned to achieve the paradox via geography or other variables such as education level or race, then such a paradox is  more serious than if voters in these  partition subsets have nothing in common. The candidate who would have won in the sub-elections might think that having won both the north and the south, or both college-educated and non college-educated voters, for example, they should have won the election. In  Example \ref{first_paradox_example}, by contrast, we do not know if there is a common variable uniting voters in $\mathcal{B}_1$ or $\mathcal{B}_2$ in Partition 1 or 2.  Thus, the paradox demonstrated in that example is  of less concern than if we could definitively partition the voters by geography, race, education level, etc.  Because we consider all possible partitions, many of our demonstrated demonstrated paradoxes will  be of less concern in this sense.

The primary limitation of our study is that in real-world ranked-choice elections, we often have minimal information about the electorate---usually  just a preference profile. In such cases, we cannot know whether a given partition represents something meaningful like a north-south split. However, for a subset of  real-world elections, additional geographic information is available, such as the location of voters by precinct.  For these elections,  can search for natural geographic partitions of the electorate. The results, including several interesting examples, are given in Section \ref{section:precincts}.

\section{Necessary and Sufficient Conditions}\label{section:conditions}

To determine the  likelihood that a three-candidate election is susceptible to a reinforcement paradox, we must be able to  characterize when it is possible to  partition $\mathcal{B}$ into  two sub-elections $\mathcal{B}_1$  and $\mathcal{B}_2$ such that either the runner-up or the plurality loser wins both sub-elections.

Assume that the preference profiles for complete and partial rankings are  as shown in Tables \ref{profile_complete_ballots} and \ref{profile_partial_ballots} where in each case the expressions $X_i, Y_i$ and $Z_i$ indicate the number of voters with the corresponding ranking. Note that in Table  \ref{profile_partial_ballots} we do not list ballots of length two, as such ballots convey the same ranking information as the corresponding complete ballots. 

\begin{table}[tbh]
\centering
\begin{tabular}{l|cccccc}
Num. Voters & $X_1$ & $X_2$  & $Y_1$ & $Y_2$ &  $Z_1$ & $Z_2$ \\
\hline
1st choice & $A$  &$A$ & $B$ & $B$ & $C$ & $C$ \\
2nd choice &$B$  &$C$ & $A$ & $C$ & $A$& $B$ \\
3rd choice &$C$  &$B$ & $C$ & $A$ & $B$ & $A$ \\
 
\end{tabular}
\caption{Preference profile with complete preferences.}
\label{profile_complete_ballots}

\end{table}

\begin{table}[tbh]

\centering
\begin{tabular}{l|ccccccccc}
Num. Voters & $X_1$ & $X_2$ & $X_3$ & $Y_1$ & $Y_2$ & $Y_3$ & $Z_1$& $Z_2$ & $Z_3$ \\
\hline
1st choice & $A$& $A$ &$A$ &$B$ & $B$ & $B$ &$C$&$C$&  $C$ \\
2nd choice &  $B$&$C$ && $A$ & $C$ &&$A$  &$B$ \\
3rd choice & $C$ &$B$  && $C$ & $A$ &&$B$&  $A$ \\
\end{tabular}
\caption{Preference profile with some partial preferences.}
\label{profile_partial_ballots}
\end{table}

In the complete ballot case, let $X=X_1+X_2$, $Y=Y_1+Y_2$, and $Z=Z_1+Z_2$; in the partial ballot case, let $X'=X_1+X_2+X_3$, $Y'=Y_1+Y_2+Y_3$, and $Z'=Z_1+Z_2+Z_3$.
We assume without loss of generality that $A$ is the IRV winner and $C$ is the plurality loser. 

First, we observe that if $A$ has a  majority of first-place votes in the first round then the election is not susceptible to any kind of reinforcement paradox. That is, we cannot observe a paradox if $X>Y+Z$ (in the complete ballot case) or $X^\prime>Y^\prime+Z^\prime$ (in the partial ballot case).  The reason is that if a $A$ has a majority, then they must earn a majority of first-place votes in either $\mathcal{B}_1$ or $\mathcal{B}_2$ for any partition of the ballot set.

\begin{observation}
If an election contains a majority candidate then the election is not susceptible to a reinforcement paradox.
\end{observation}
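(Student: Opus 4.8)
The plan is to show that the defining inequality $f(\mathcal{B}_1)\neq f(\mathcal{B}_1\cup\mathcal{B}_2)$ of a reinforcement paradox can never hold once $\mathcal{B}$ contains a majority candidate, regardless of how $\mathcal{B}$ is partitioned. So fix an election with ballot set $\mathcal{B}$ and suppose candidate $A$ receives more than half of the first-place votes among all ballots in $\mathcal{B}$.

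First I would record the elementary fact that a majority candidate wins IRV. Since no ballot is exhausted at the start of the count, $A$ has a majority of the (remaining) first-place votes in the first round and is therefore declared the winner; hence $f(\mathcal{B})=A$. This reasoning is insensitive to whether some ballots are partial, since every ballot still has a first choice and exhaustion can only occur in later rounds.

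Next I would establish the key combinatorial step: for any partition $\mathcal{B}=\mathcal{B}_1\cup\mathcal{B}_2$, candidate $A$ still holds a strict majority of first-place votes in at least one of $\mathcal{B}_1,\mathcal{B}_2$. Writing $a_i$ for the number of ballots in $\mathcal{B}_i$ whose first choice is $A$ and $v_i=|\mathcal{B}_i|$, we have $a_1+a_2 > (v_1+v_2)/2$; if $A$ failed to hold a strict majority in both parts, then $a_1\le v_1/2$ and $a_2\le v_2/2$, and adding these contradicts the previous inequality. So, without loss of generality, $A$ has a strict majority of first-place votes in $\mathcal{B}_1$, and by the first step applied to the sub-election $\mathcal{B}_1$ we get $f(\mathcal{B}_1)=A$.

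Combining the two steps yields $f(\mathcal{B}_1)=A=f(\mathcal{B})=f(\mathcal{B}_1\cup\mathcal{B}_2)$, so this partition does not exhibit the paradox; since the partition was arbitrary, the election cannot demonstrate a reinforcement paradox. There is essentially no serious obstacle here: the only point that needs a moment of care is the pigeonhole inequality — a strict majority cannot be simultaneously diluted below one-half in both blocks of a partition — together with the observation that the whole argument is unaffected by partial ballots because no ballot is exhausted in the first round.
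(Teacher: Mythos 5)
Your proposal is correct and follows essentially the same route as the paper, which justifies the observation by noting that a majority candidate must retain a majority of first-place votes in at least one block of any partition and therefore wins that sub-election, blocking the condition $f(\mathcal{B}_1)=f(\mathcal{B}_2)\neq f(\mathcal{B}_1\cup\mathcal{B}_2)$. You simply make explicit the pigeonhole inequality and the handling of partial ballots that the paper leaves implicit.
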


Thus, since $A$ is the IRV winner, we assume in the complete ballot case that 

\[Z<X, \ Z<Y  \text{ and } Y+Z_2<X+Z_1\]

and in  the partial ballot case that

\[Z'<X', \ Z'<Y' \text{ and } Y'+Z_2<X'+Z_1.\]

In addition, as remarked previously, if there is a partition such that either the runner-up $B$ or the plurality loser $C$ wins both sub-elections in two rounds, then the order of elimination in $\mathcal{B}_1$ must be different from that in $\mathcal{B}_2$.  (Suppose, for instance, that $B$ wins both sub-elections in two rounds and that $C$  is the plurality loser in both sub-elections. Using the notation of Table \ref{complete_cond}, this implies $X^\prime+Z_1< Y^\prime+Z_2$ in both sub-elections, which implies the same inequality holds in the original election.)  This leads to the following. 

\begin{observation}
If there is a partition of the  electorate in which either the runner-up $B$ or the plurality loser $C$ wins both sub-elections in the second round of counting, then the plurality loser is different in each sub-election.
\end{observation}

As the following Propositions indicate, it is also possible for an election to have a partition in which  the runner-up wins a majority of first-place votes in one sub-election while winning in round 2 in the other sub-election.  

\begin{proposition}\label{complete_cond}
(Complete Ballots). An election demonstrates a runner-up reinforcement paradox if and only if the election satisfies
\begin{equation}\label{RU_round2}
Y>X-Z_2+1 \text{ and } Y>-X_1+X_2+Z_1-Z_2+1.
\end{equation}
That is, conditions (\ref{RU_round2}) are necessary and sufficient for there to exist a partition of the election into two sub-elections in both of which the runner-up is the winner.  If in addition, 
\begin{equation}\label{RU_round1}
Y>X +1\text{ and } Y> -X_1+X_2+Z+1,
\end{equation}
then there is a partition in which the runner-up wins a majority of first-place votes in one of the sub-elections and wins in the second round in the other sub-election.

%


An election demonstrates a plurality loser reinforcement paradox if and only if  

\begin{equation}\label{PL_round2}
\begin{aligned}
Z &> X - Y_2+1, \\
Z &> X_1 - X_2 + Y_1 - Y_2+1, \\
Z &> -X_2 + Y + 1, \text{ and }\\
Z &> \tfrac{1}{3}(X + Y) + 1.
\end{aligned}
\end{equation}

In this case, there is a partition in which the plurality loser wins in the second round in both sub-elections.
\end{proposition}

\begin{proposition}\label{partial_cond}
(Partial Ballots). An election demonstrates a runner-up reinforcement paradox if and only if the election satisfies

\begin{equation}\label{part_round2}
\begin{aligned}
Y' &> X' - Z_2 + 1, \\
Y' &> -X_1 + X_2 + Z_1 - Z_2 + 1, \text{ and}\\
Y' &> \tfrac{1}{3}(-X_1 + 2X_2 + 2X_3 + Z_1 - Z_2) + 1.
\end{aligned}
\end{equation}

That is, conditions (\ref{part_round2}) are necessary and sufficient for there to exist a partition of the election into two sub-elections in both of which the runner-up is the winner. If in addition, 
\begin{equation}\label{part_round1}
Y^\prime>X^\prime+1 \text{ and } Y^\prime>-X_1+X_2+Z^\prime+1,
\end{equation}
then there is a partition in which the runner-up wins a majority of first-place votes in one of the sub-elections and wins in the second round in the other sub-election.

An election demonstrates a plurality loser reinforcement paradox if and only if  
\begin{equation}\label{part_PL_round2}
\begin{aligned}
Z' &> X' - Y_2 + 1, \\
Z' &> X_1 - X_2 + Y_1 - Y_2 + 1, \\
Z' &> -X_2 + Y' + 1, \text{ and}\\
Z' &> \tfrac{1}{3}(X' + Y') + 1.
\end{aligned}
\end{equation}

In this case, there is a partition in which the plurality loser wins in the second round in both sub-elections.
\end{proposition}

In subsequent sections we sometimes consider a ``limiting case'' where the number of voters approaches infinity. In this case, the values at the top of a preference profile can be considered proportions rather than vote counts, in which case the ``+1'' term in the above propositions disappears. That is, the above propositions still hold for the limiting case by eliminating the plus 1 term, which corresponds to dividing 1 by a larger and larger number of voters as we take the limit.

The conditions in Proposition \ref{partial_cond} reduce to those in Proposition \ref{complete_cond} when $X_3=Y_3=Z_3=0$. The  condition $Y^\prime>\frac{1}{3}(-X_1+2X_2+2X_3+Z_1-Z_2)+1$ in  Proposition \ref{partial_cond} reduces to $Y>\frac{1}{3}(-X_1+2X_2+Z_1-Z_2)+1$ and does not have an analogue in Proposition \ref{complete_cond} because it is implied by the other conditions. To see this,  note that

\begin{align*}
X_1-2X_2&+3Y-Z_1+Z_2-3=(X-Y+Z_1-Z_2)\\
&+2(-X+Y+Z_2-1)+(X_1-X_2+Y-Z_1+Z_2-1)+X_1+(Y-Z).\end{align*}
But $-X+Y+Z_2-1\ge 0$ and $X_1-X_2+Y-Z_1+Z_2-1\ge 0$ by the remaining two conditions in (\ref{RU_round2}), and $Y-X>0$ and $X-Y+Z_1-Z_2>0$ by assumption. Hence $X_1-2X_2+3Y-Z_1+Z_2-3>0$, which implies $Y>\frac{1}{3}(-X_1+2X_2+Z_1-Z_2)+1$.

Before sketching the proof of the propositions, we make a few remarks. First, an election can be susceptible to either a runner-up or plurality loser reinforcement paradox when the IRV and plurality winners in the original election are distinct or when they coincide, for the complete ballot or partial ballot case. 

Second, condition (\ref{RU_round1}) clearly can only be satisfied if  the IRV runner-up is the plurality winner. In this case, as the proof  will demonstrate, if there is a partition in which the runner-up wins a majority of first-place votes in one sub-election, they must win in two rounds in the other sub-election with the original IRV winner being eliminated in round 1. 

Third, it is easy to show that it is not possible to have a partition in which  the plurality loser has a majority of first-place votes in one sub-election and wins in two rounds in the other sub-election. Instead, a plurality loser reinforcement paradox can only occur when the plurality loser wins in two rounds in both sub-elections.

We prove a portion of Proposition  \ref{complete_cond}, as outlined below. The remaining proofs are analogous (and considerably longer).

\begin{proof} (of Proposition  \ref{complete_cond})
We prove that the inequalities in (\ref{RU_round1}) of Proposition  \ref{complete_cond} are necessary and sufficient for there to be a partition in which the runner-up wins a majority of first-place votes in one of the sub-elections and wins in the second round in the other sub-election. 

Suppose the original election has a voter profile as shown  in  Table \ref{profile_complete_ballots}. Since candidate $B$ is not  eliminated in either sub-election, we can collapse the two columns  in Table \ref{profile_complete_ballots} that rank $B$ first and simply consider the number of number of voters $Y$ who rank $B$ first and the remaining candidates in any order. Now suppose $\mathcal{B}_1$ and $\mathcal{B}_2$ are a partition of the  election into two sets of ballots. Let the voter profile of $\mathcal{B}_1$ be as in Table \ref{profile_complete_ballots} with $x_1$ replacing $X_1$, $x_2$ replacing $X_2$, $y$ replacing $Y$, $z_1$ replacing $Z_1$ and $z_2$ replacing $Z_2$, for some $0 \le x_1 \le X_1$, $0 \le x_2 \le X_2$,  $0 \le y \le Y,$ etc. Let $\mathcal{B}_2$ be the same except with $X_1-x_1,$ $X_2-x_2$, $Y-y$, $Z_1-z_1$ and $Z_2-z_2$.  

Suppose $B$ wins by a  majority of first-place votes in $\mathcal{B}_2$. Clearly, it also cannot win by a majority in  $\mathcal{B}_1$ (otherwise $B$ would have a majority of first-place votes in the original election). So $B$ must win in two rounds in $\mathcal{B}_1.$ We claim that $A$ must be the plurality loser in $\mathcal{B}_1.$

To see this, suppose $C$ is the plurality loser in $\mathcal{B}_1.$ After $C$ is eliminated,
$$x_1+x_1+z_1 <  y+z_2.$$
On the other hand, since $B$ wins in a majority in $\mathcal{B}_2,$
$$(X_1-x_1)+(X_2-x_2)+(Z_1-z_1)+(Z_2-z_2) < Y-y.$$

Solving both of these inequalities for $z_1$, and recalling that $z_1$ is an integer, we have 
$$X_1+X_2-Y+Z_1+Z_2-x_1-x_2+y-z_2 +1\le z_1 \le -x_1-x_2+y+z_2-1,$$ which implies 
$$X_1+X_2-Y+Z_1+Z_2+2 \le 2z_2.$$
Since $z_2 \le Z_2$, this implies $X_1+X_2-Y+Z_1+Z_2+2 \le 2Z_2$ or
$$X_1+X_2+Z_1+2 \le Y+Z_2,$$
which is a contradiction, since $A$ is the IRV winner in the original election. This proves the claim.

Thus $A$ must be the plurality loser in $\mathcal{B}_1$  with $B$ winning in round 2.  Let $z=z_1+z_2$ be the number of voters in  $\mathcal{B}_1$ who first-rank $C$ and let $Z-z=(Z_1-z_1)+(Z_2-z_2)$ be the number of analogous voters in  $\mathcal{B}_2$. We require (recalling that $X=X_1+X_2$)
\begin{align}
x_1+x_2& < y  \label{eq_1}\\
x_1+x_2& < z \label{eq_2} \\
x_2+z & < y+x_1   \label{eq_3}\\
 (X-x_1-x_2)+ (Z-z) &< (Y-y). \label{eq_4}
\end{align}
We simplify these inequalities in a series of steps in which we sequentially eliminate variables.  In the first step, we collect lower and upper bounds for $y$.

Step 1: Solving (\ref{eq_1}), (\ref{eq_2}) and (\ref{eq_4}) for $y$ and adding the boundary constraints $0 \le y \le Y,$ we have

$$\max\{  x_1+x_2 +1,  -x_1+x_2+z +1\} \le y \le \min\{Y, -X+Y-Z+x_1+x_2+z-1 \}$$

Since each lower bound must be bounded above by each upper bound, this implies 

\begin{align*}
x_1+x_2+1 & \le Y & \Rightarrow  & x_2   \le Y-x_1-1\\
x_1+x_2+1 & \le -X+Y-Z+x_1+x_2+z-1 & \Rightarrow  & z \ge X-Y+Z+2 \\
-x_1+x_2+z +1 & \le Y & \Rightarrow  & z   \le Y+x_1-x_2-1\\
-x_1+x_2+z +1 & \le -X+Y-Z+x_1+x_2+z-1 & \Rightarrow  & x_1 \ge \frac{1}{2}(X-Y+Z+2). \\
\end{align*}

Step 2: Collecting all the bounds (including  (\ref{eq_2})) on $z$, along with the boundary constraint $0 \le z \le Z$, we have

$$\max\{  x_1+x_2+1,  X-Y+Z+2 \} \le z \le \min\{Z, Y+x_1-x_2-1 \}.$$

Again, since each lower bound must be bounded above by each upper bound, we have
\begin{align*}
x_1+x_2+1 & \le Z & \Rightarrow  & x_2   \le Z-x_1-1\\
x_1+x_2+1 & \le Y+x_1-x_2-1 & \Rightarrow  & x_2 \le \frac{1}{2}(Y-2) \\
 X-Y+Z+2 & \le Z & \Rightarrow  & Y \ge X+2\\
 X-Y+Z+2 & \le Y+x_1-x_2-1 & \Rightarrow  & x_2 \le -X+2Y-Z+x_1-3 \\
\end{align*}

Step 3:  Collecting all the bounds  on $x_2$, we have
$$\max\{ 0 \} \le x_2 \le \min\{X_2, Y-x_1-1, Z-x_1-1, \frac{1}{2}(Y-2), -X+2Y-Z+x_1-3  \}.$$
Hence
\begin{align*}
0 & \le Y-x_1-1 & \Rightarrow  & x_1   \le Y-1\\
0 & \le Z-x_1-1 & \Rightarrow  & x_1 \le Z-1 \\
0 & \le \frac{1}{2}(Y-2) & \Rightarrow  & Y \ge 2\\
0 & \le -X+2Y-Z+x_1-3 & \Rightarrow  & x_1 \ge X-2Y+Z+3 \\
\end{align*}

Step 4: Finally, collecting all the bounds on  $x_1$, we have

$$\max\{ 0, \frac{1}{2}(X-Y+Z+2),  X-2Y+Z+3  \} \le x_1 \le \min\{X_1, Z-1,   \}.$$
This implies
\begin{align*}
0 & \le Z-1 & \Rightarrow  & Z \ge 1\\
\frac{1}{2}(X-Y+Z+2) & \le X_1 & \Rightarrow  & Y \ge -X_1+X_2+Z+2 \\
\frac{1}{2}(X-Y+Z+2) & \le Z-1 & \Rightarrow  & Y \ge X-Z+4 \\
X-2Y+Z+3& \le X_1 & \Rightarrow  & Y \ge \frac{1}{2}(X+Z+3) \\
X-2Y+Z+3 & \le Z-1 & \Rightarrow  & Y \ge \frac{1}{2}(X+4)\\
\end{align*}

Step 5: Collecting all the inequalities relating $X_1$, $X_2$, $Y$ and $Z$, these reduce to (assuming $Z \ge 4$): 
$$Y \ge X+2 \quad \text{and} \quad Y \ge -X_1+X_2+Z+2.$$ 
Since $Y$ is an integer, these conditions are equivalent to the conditions   $Y >X+1$ and $Y> -X_1+X_2+Z+1$   in (\ref{RU_round1}) of Proposition \ref{complete_cond}. \end{proof}

 Propositions \ref{complete_cond} and \ref{partial_cond} identify the necessary conditions on the original election for a runner-up or plurality loser reinforcement paradox to occur but do not identify any partitions. However, the inequalities  that arise at each step of the propositions' proofs can be used to identify  election partitions leading to the paradoxes.

For instance, from the proof above of the necessity of (\ref{RU_round1}) of Proposition \ref{complete_cond}, we have

\begin{enumerate}
\item $\max\{  x_1+x_2 +1,  -x_1+x_2+z +1\} \le y \le \min\{Y, -X+Y-Z+x_1+x_2+z-1 \}$
\item $\max\{  x_1+x_2+1,  X-Y+Z+2 \} \le z \le \min\{Z, Y+x_1-x_2-1 \}$
\item $\max\{ 0 \} \le x_2 \le \min\{X_2, Y-x_1-1, Z-x_1-1, \frac{1}{2}(Y-2), -X+2Y-Z+x_1-3  \}$
\item $\max\{ 0, \frac{1}{2}(X-Y+Z+2),  X-2Y+Z+3  \} \le x_1 \le \min\{X_1, Z-1,   \}.$
\end{enumerate}

Thus, if $X_1=50,$ $X_2=50$, $Y=102$, $Z_1=50$ and $Z_2=45,$ we can select  (among other possibilities), a partition in which the first sub-election is

$$x_1=48, x_2=40, y= 89 \mbox{ and } z = 95,$$ 
and the second partition is
$$X_1-x_1=2, X_2-x_2=10, Y-y=13, Z-z=0.$$
It is easily seen that $A$ wins the original election, but in $\mathcal B_1$ candidate $B$ wins in two rounds after $A$ is eliminated and in $\mathcal B_2$, candidate $B$ has a majority.

\subsection{Consequences of the Algebraic Conditions}

The inequalities identified in Propositions \ref{complete_cond} and \ref{partial_cond} allow us to identify further anomalies. For instance, 
Table \ref{CWparadox} demonstrates that it is possible for a candidate to be the IRV winner, a Condorcet winner and a plurality winner and yet the profile still  satisfies the  requirements in Propositions \ref{complete_cond} for both a runner-up and plurality loser paradox.

\begin{table}
\centering
\begin{tabular}{l|c|c|c|c|c|c}
Num. Voters & 1001 & 50 & 521& 480 & 561& 439\\
\hline
1st & $A$ & $A$ & $B$ & $B$& $C$ & $C$\\ 
2nd & $B$ & $C$ & $A$ & $C$& $A$ & $B$ \\
3rd & $C$ & $B$ & $C$&$A$ & $B$ & $A$\\
\end{tabular}
\caption{(Top) A profile with the IRV,  plurality,  and   Condorcet winners coinciding that demonstrates both  runner-up and plurality loser paradoxes. }
\label{CWparadox}
\end{table}

The next example shows that reinforcement paradoxes can be nested: an election may exhibit a runner-up or plurality loser paradox and each sub-election demonstrate an additional reinforcement paradoxes.

\begin{example}\label{example:nested}
 Consider the profile in Table \ref{nested1} where   $A$ is the IRV winner. The election demonstrates a runner-up paradox and $B$ wins in both sub-elections  $\mathcal{B}_1$ and $\mathcal{B}_2$. However each of these sub-elections themselves satisfies the conditions for a runner-up paradox. This means that  sub-election $\mathcal{B}_1$, in which $C$ is the plurality loser and $A$ is the runner-up, can be further subdivided into two smaller sub-elections in which $A$ is the IRV winner. (One such partition is $\mathcal{B}_{11}$  which corresponds to $X_1=12$, $X_2=31$, $Y_1=0$, $Y_2=25$, $Z_1=0$ and $Z_2=17$ and $\mathcal{B}_{12}$ which corresponds to $X_1=28$, $X_2=31$, $Y_1=5$,$Y_2=26$, $Z_1=1$ and $Z_2=31$.) And sub-election $\mathcal{B}_2$, in which $A$ is the plurality loser and $C$ is the runner-up, can be further subdivided into two smaller sub-elections in which $C$ is the IRV winner. (One such partition is $\mathcal{B}_{21}$  which corresponds to $X_1=50$, $X_2=0$, $Y_1=4$, $Y_2=25$, $Z_1=30$ and $Z_2=0$ and $\mathcal{B}_{22}$ which corresponds to $X_1=10$, $X_2=38$, $Y_1=41$, $Y_2=54$, $Z_1=69$ and $Z_2=2$.)

\end{example}

\begin{table}
\centering
\begin{tabular}{ll|cccccc}
\multicolumn{2}{c|}{Num. Voters} & 100 & 100 & 50 & 130 & 100 & 50  \\
\hline
\multicolumn{2}{c|}{1st choice}  & $A$ & $A$ & $B$ & $B$ & $C$ & $C$ \\
\multicolumn{2}{c|}{2nd choice}  & $B$ & $C$ &$A$ & $C$ & $A$ &$B$ \\
\multicolumn{2}{c|}{3rd choice}  & $C$ & $B$ &$C$ & $A$ & $B$ &$A$  \\
\hline
\hline
\multirow{2}{*}{Partition} & $\mathcal{B}_1$ & 40&62&5&51&1&48 \\
 & $\mathcal{B}_2$ & 60&38&45&79&99&2 \\
\end{tabular}
\caption{A profile that demonstrates a runner-up paradox with sub-elections that demonstrate additional reinforcement paradoxes. }
\label{nested1}
\end{table}

Additional instances of the paradox may occur in which an election can be divided into three or more sub-elections in which either the runner-up or plurality loser  wins.

\begin{table}
\centering
\begin{tabular}{l|c|c|c|c}
Num. Voters & 1000 & 999 & 560&438\\
\hline
1st choice& $A$ & $B$ & $C$ & $C$\\ 
2nd choice& $C$ & $C$ & $A$ & $B$ \\
3rd choice& $B$ & $A$ & $B$&$A$ \\
\end{tabular}

\caption{A profile that demonstrates a plurality loser, but not a runner-up, reinforcement paradox.}
\label{PL_not_RU_ex}
\end{table}

Finally, we note that elections may be susceptible to a plurality loser paradox but not a runner-up paradox, as demonstrated by the profile in  Table \ref{PL_not_RU_ex} where neither Condition (1) nor (2) of Proposition \ref{complete_cond} is satisfied, but the plurality loser inequalities are satisfied.  Such preference profiles can arise when voter preferences are ``single-peaked.'' In this case, candidate $C$ is like a ``centrist'' candidate in a spatial model, as if $A$, $B$, and $C$ have been arranged on a left-to-right line, with $C$ in the middle. 

Of course, real-world elections typically do not return ballot data resembling the profile in Table \ref{PL_not_RU_ex}. In real elections, the IRV runner-up tends to be  a much stronger candidate than the plurality loser. Thus, we expect that if a real-world election is susceptible to a plurality loser reinforcement paradox then it will also be susceptible to a runner-up paradox.  Our results in Section \ref{empirical} bear out this expectation.

\section{Threshold Results}\label{threshold_results}

In this section we examine the minimal percentage each candidate can earn while the  election still demonstrates a reinforcement paradox. For some classical paradoxes such as upward and downward monotonicity paradoxes, an election can exhibit such a paradox only if each of the three candidates is not ``weak.'' We show this is not the case for a general reinforcement paradox in Proposition \ref{small_amt_prop}, where the plurality loser can have an arbitrarily low percentage of the first-places votes and yet an election is susceptible to a (runner-up) reinforcement paradox. We show in Proposition \ref{second_prop} that for an election to be susceptible to a runner-up (respectively plurality loser) paradox, the IRV runner-up (respectively plurality loser) must earn at least 25\% of the first-place votes.

In this section, we assume that the number of voters approaches infinity, so that we use proportions of votes rather than integers. We also assume that each voter provides a complete ranking as this does not affect the logic of the proofs. Since the number of voters approaches infinity, we ignore the ``+1'' which appears in the Propositions in Section \ref{section:conditions}, as this term limits to zero.

\begin{proposition}\label{small_amt_prop}
For any $\epsilon>0$, there is an election which is susceptible to the reinforcement paradox and the first-place vote share for the plurality loser is less than $\epsilon$.
\end{proposition}

\begin{proof}

Consider the profile in Table \ref{first_prop_min}, where $p_A$ is the proportion of voters who cast the ballot $A\succ B \succ C$ and also the proportion of voters who cast the ballot $A\succ C \succ B$, and $\beta$ and $\delta$ are positive and very small. Furthermore, assume $2p_A+\beta<0.5$ and $4p_A+\beta+\delta=1$ (which implies $\delta>\beta)$. Note that $A$ is the IRV winner and $C$ is the plurality loser. 

This profile satisfies condition (2) of Proposition \ref{complete_cond} and thus the election demonstrates a runner-up reinforcement paradox. Since $\delta$ is arbitrary, we can make $\delta < \epsilon$ and the proposition is proven. For completeness, we include a partition of the electorate into two sub-electorates, as shown in the table. Let $\delta_B$ and $\delta_A$ be positive small parameters such that $0<\delta_B<\delta_A<\delta/2$. Then  $B$ wins when using the ballots of both sub-electorates separately. \end{proof}

\begin{table}
\centering
\begin{tabular}{ll|cccc}
\multicolumn{2}{c|}{Prop. Voters} & $p_A$ & $p_A$ & $2p_A+\beta$ & $\delta$  \\
\hline
\multicolumn{2}{c|}{1st choice}  & $A$ & $A$ & $B$ & $C$ \\
\multicolumn{2}{c|}{2nd choice}  & $B$ & $C$ &$A$ & $A$  \\
\multicolumn{2}{c|}{3rd choice}  & $C$ & $B$ &$C$  & $B$  \\
\hline
\hline
\multirow{2}{*}{Partition} & $\mathcal{B}_1$ & $\delta-\delta_A$&0&$\delta-\delta_B$&$\delta$\\
 & $\mathcal{B}_2$ & $p_A-\delta+\delta_A$&$p_A$&$2p_A+\beta-\delta+\delta_B$&0 \\

\end{tabular}

\caption{A profile in which the plurality loser earns a small proportion of first-place votes and the election demonstrates a (runner-up) reinforcement paradox. }
\label{first_prop_min}
\end{table}

Thus, for a large enough electorate the plurality loser can control an arbitrarily small amount of first-place votes and the election can demonstrate a reinforcement paradox. This proposition distinguishes reinforcement paradoxes from other classical paradoxes such as upward or downward monotonicity paradoxes. For a three-candidate election to demonstrate an upward (resp. downward) paradox, each candidate must earn at least 25\% (respectively $16\frac{2}{3}\%$) of the first-place votes \cite{Mi17}. That is, for some classical paradoxes none of the candidates can have an extremely weak first-place showing; each candidate must be ``viable'' in some sense. Proposition \ref{small_amt_prop} shows this is not the case for the reinforcement paradox. However, in the next proposition we show that for a runner-up (resp. plurality) reinforcement paradox, the runner-up (resp. plurality loser) must earn at least 25\% of the first-place votes.

\begin{proposition}\label{second_prop}

If an election is susceptible to a runner-up (resp. plurality loser) reinforcement paradox then the runner-up (resp. plurality loser) must earn at least 25\% of the first-place votes in round 1. This bound is tight in the limit as the number of voters approaches $\infty$.

\end{proposition}

\begin{proof}

\textbf{Runner-up case.} Recall that in order for an election to demonstrate a reinforcement paradox, no candidate can earn an initial majority, and thus candidate $A$ must earn less than 50\% of the first-place votes. This implies $B$ must earn at least $25\%$ of the first-place vote. 

To see that that bounds are tight,   consider the profile in Table \ref{prop_min} where   $\epsilon_A, \epsilon_B, \epsilon_1$ and $\epsilon_2\ge 0$ are  small parameters satisfying $ \epsilon_B=\epsilon_A-\epsilon_1+\epsilon_2.$ If   $\epsilon_B>\epsilon_2-\epsilon_A$ then this profile satisfies the conditions in (2) of Proposition \ref{complete_cond}.

\begin{table}
\centering
\begin{tabular}{l|c|c|c|c}
Num. Voters & $0.50-\epsilon_A$ & $0.25+\epsilon_B$ & $\epsilon_1$ & $0.25-\epsilon_2$\\
\hline
1st choice& $A$ & $B$ & $C$ & $C$\\
2nd choice& $B$ & $A$ & $A$ & $B$ \\
3rd choice& $C$ & $C$ & $B$ & $A$\\
\end{tabular}
\caption{Minimum profile for  runner-up reinforcement paradox}
\label{prop_min}
\end{table}

\textbf{Plurality loser case.} From Proposition \ref{complete_cond},  $Z>\frac{1}{3}(X+Y)$ and thus the plurality loser must have at least $25\%$ of the first-place vote. On the other hand, consider the profile in Table \ref{prop_min2} in which the plurality loser has just over 25\% of the vote and yet,  for some suitably small $\epsilon>0,$ satisfies the plurality loser inequalities from  Proposition \ref{complete_cond}. \end{proof}
\begin{table}
\centering
\begin{tabular}{l|c|c|c}
Num. Voters & $0.375-\epsilon$ & $0.375-2\epsilon$  & $0.25+3\epsilon$\\
\hline
1st choice& $A$ & $B$ & $C$ \\
2nd choice& $C$ & $C$ & $A$  \\
3rd choice& $B$ & $A$ & $B$ \\
\end{tabular}
\caption{Minimum profile for  plurality loser reinforcement paradox}
\label{prop_min2}
\end{table}


We conclude this section by noting that among elections susceptible to a reinforcement paradox there is
no lower bound on how small one of the sub-elections can be, as a percentage of the whole.

\begin{example}\label{small_subelection}

Let $m\in\mathbb{N}$ be a large even number. Consider an election $\mathcal{B}$ with  $3m-1$ ballots (complete or partial) in which the first and second preferences are distributed as follows:  $m$ ballots of the form $A\succ B$; $m+1$ ballots of the form $B\succ A$;  $m/2$ ballots of the form $C\succ A$; and $m/2-2$ ballots of  the form $C\succ B$. Candidate  $A$ wins this election in two rounds after the $C$ is eliminated in round 1.

Let $\mathcal{B}_1$ consist of the sub-election with $9$ ballots:  $2$ ballots of the form $A\succ B$; $3$ ballots  of the form $B\succ A$; and $4$ ballots of the form $C\succ A$. Let $\mathcal{B}_2$ consist of the sub-election with $3m-10$ ballots: $m-2$ ballots of the form $A\succ B$; $m-2$ ballots of the form $B\succ A$; $m/2-4$ ballots of the form $C\succ A$; and $m/2-2$ ballots of the form $C\succ B$. Then $\mathcal{B}_1$ and $\mathcal{B}_2$ form a partition of $\mathcal{B}$, and candidate $B$ wins both of these sub-elections. Since $m$ can be arbitrarily large, this implies there exists an election demonstrating a reinforcement paradox in which one of the sub-elections consists of an arbitrarily small fraction of the ballots in the original election.
\end{example} 

Example \ref{small_subelection} is interesting as it suggests that $B$ was ``close'' to being a winner in the original election. In fact, $B$ might feel it particularly unfair that the addition of a handful of ballots (from $\mathcal{B}_1$, where $B$ was  a winner) caused $B$ to go from winner (in $\mathcal{B}_2$) to loser (in $\mathcal{B}_1 \cup \mathcal{B}_2$).

\section{Results under various models}\label{models_section}

In this section we use different models of voter behavior to generate millions of synthetic elections. Among the many possible models we could use, we choose models which have been used extensively in prior social choice research. For each model, we generate 100,000 profiles and check whether a generated election is susceptible to a reinforcement paradox. Thus, our primary tool for providing (estimated) probabilities is Monte Carlo simulation, although we can provide exact probabilities using other means for the impartial anonymous culture (IAC)  model.

We are interested in probabilities for relatively large electorates, as the real-world elections we analyze in the next section have electorate sizes ranging from a few hundred to almost one million (although most of the elections contain 20,000 or fewer voters). Thus, our simulations use an electorate size of $V=1001$ voters. We  also ran simulations with $V \in \{601, 2003, 3901\}$, but the estimated probabilities were not markedly different for these other choices of $V$. With the exception of the IAC model, in this section we consider probabilities only for the complete ballot case.

\subsection{Descriptions of the Models}

We use two different types of models: one-dimensional spatial models and models which sample over the ballot simplex. In the spatial framework, we consider three different models. The first model, which we denote 1D Spatial, places three candidates and $V=1001$ voters on the real line using the standard normal distribution, and a voter's preference ranking is determined by their Euclidean distance from each candidate. For example, if the three candidates $A$, $B$, and $C$ are positioned by the normal distribution at $-1.2$, $-0.1$, and 1.1 respectively, then a voter positioned at 0.3 would cast a ballot with $B$ ranked first, $C$ ranked second, and $A$ ranked third. Such one-dimensional spatial models have been widely used in the social choice literature \cite{EH84, EH90, KGF, R23}.

Because of the increased political polarization in the United States and elsewhere, we also use two bimodal spatial models to investigate the potential effects of polarization on the probability of the reinforcement paradox. Under the second spatial model, which we denote 1D Bimodal, we create a bimodal distribution from an equally weighted combination of the normal distribution centered at $-2$ and the normal distribution centered at 2, both with standard deviation 1. See  Figure \ref{bimodal_figure} (top) for a visualization of this bimodal distribution. We then place candidates and voters along the line using the generated (normalized) bimodal distribution as a probability distribution, similar to 1D Spatial. Under the third spatial model, which we denote 1D Bimodal Weighted, we use the same two normal distributions to generate the bimodal distribution but give 50\% more weight to the left distribution, resulting in a bimodal distribution as shown Figure \ref{bimodal_figure}(bottom). We use this model to capture polarization when one side of the political spectrum is stronger than the other.
(We do not investigate the effect of changing the variance.)

\begin{figure}
\centering

\includegraphics[scale=0.55]{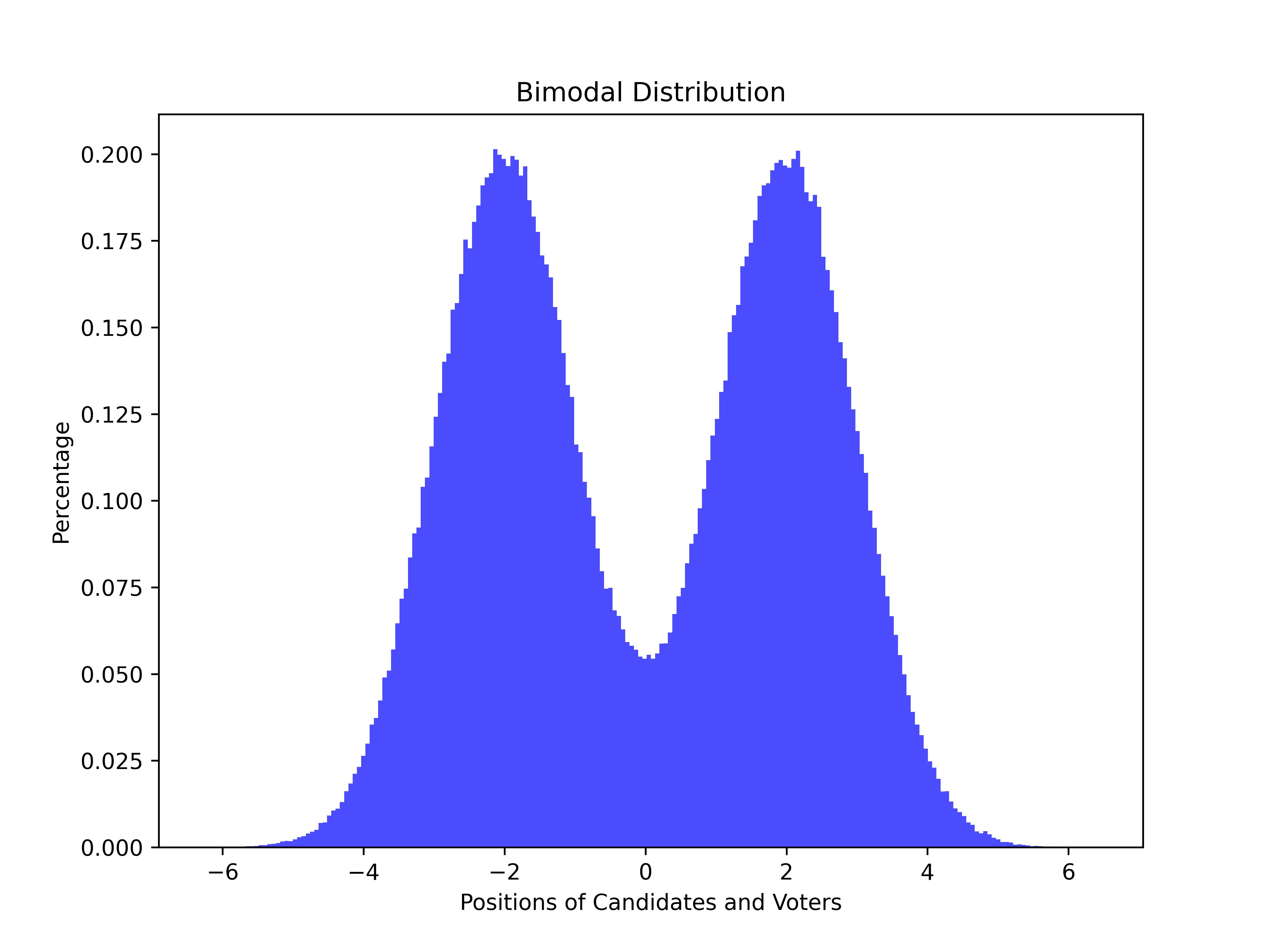}
\vspace{.05 in}

\includegraphics[scale=0.55]{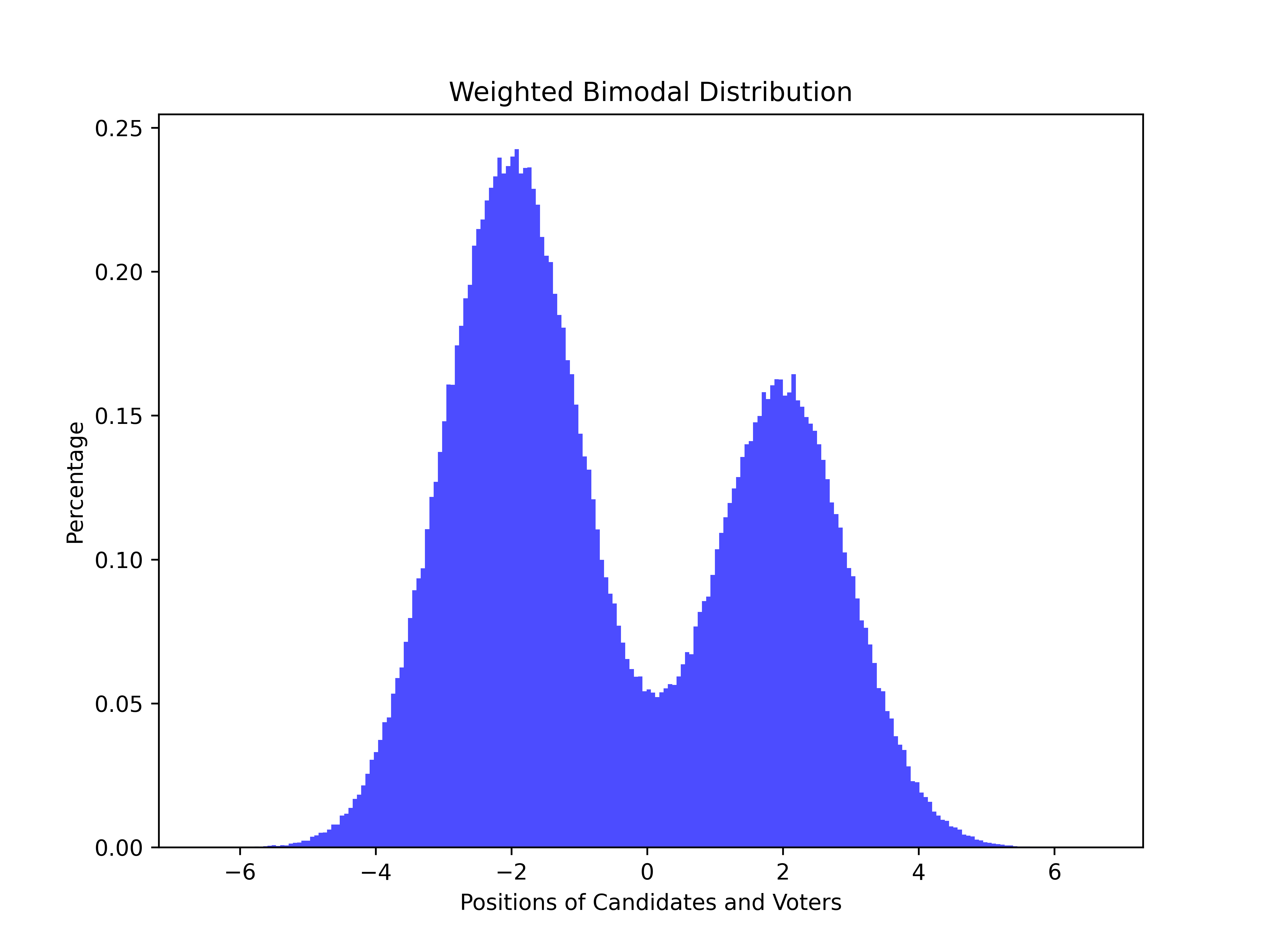}

\caption{Visualizations of the two bimodal spatial models.}
\label{bimodal_figure}
\end{figure}

For the ballot simplex models we use Dirichlet distributions with parameter $\alpha$, denoted Dir$(\alpha)$, and we use the implementation of these models provided by the software package VoteKit \cite{DDGGHMW}. (See also \cite{BGB21} for applications of the Dir$(\alpha)$ distribution to voting.) In a ballot simplex model, we place a probability distribution on the discrete ballot 5-simplex, which is the set of 6-tuples $(X_1, X_2, Y_1, Y_2, Z_1, Z_2)$ such that each number is non-negative and the six numbers sum to $V$. Under a Dirichlet model with $\alpha \in (0,\infty)$, as $\alpha \rightarrow 0$ the generated profile tends to weigh heavily in favor of a single candidate, so that the resulting election is a landslide victory for one of the candidates. For example, when $\alpha$ is close to zero, if $V=1001$ then a 6-tuple $(X_1, X_2, Y_1, Y_2, Z_1, Z_2)$ where $X_1+X_2>800$ is much more likely to be generated than a 6-tuple where each of the six numbers is approximately equal. When $\alpha = 1$ the model places a uniform distribution on the simplex, so that any 6-tuple of ranking numbers is as likely as any other. Dir$(1)$ is referred to as the \emph{impartial, anonymous culture} (IAC) model in social choice. As $\alpha$ increases the sampled 6-tuples tend to cluster in the center of the simplex, so that it becomes increasingly unlikely that a candidate earns a majority of first-place votes. In the limit $\alpha \rightarrow \infty$, the model is  the classical \emph{impartial culture} (IC) model. Under the IC model, each voter independently chooses one of the six preference rankings with uniform probability, leading to elections that, with high probability, are  closely tied. Thus, as $\alpha$ increases, a typical generated election becomes  highly competitive between the three candidates.

If we fix an $\alpha$,  we can consider the \emph{limiting probability} of observing a paradox where we let $V\rightarrow \infty$. In this limit, the discrete ballot simplex becomes the continuous 5-simplex determined by $X_1+X_2+Y_1+Y_2+Z_1+Z_2=1$, where each term in the sum is represents a (non-negative) proportion of votes. By using large electorates in our simulations, we estimate limiting probabilities under each model.

\subsection{Results}

Throughout this section and the next, in our analysis we identify four probabilities: the probability that a runner-up  (RU) paradox occurs; the probability that a plurality loser  (PL) paradox occurs; the probability that at least one of these  (RU or PL) paradoxes occurs; and the probability that both  (RU and PL) paradoxes occur.

We begin by providing  exact probabilities under Dir$(1)$, or IAC model. Because of its simplicity and prevalence in the social choice literature,  researchers have dedicated significant effort to developing tools for calculating probabilities under this model. One of the fruits of these labors is the software package Normaliz \cite{BISO},  which can provide limiting probabilities under IAC in three-candidate elections given a set of linear constraints.   In the complete ballot case, Normaliz determines the proportion of integer values of $X_1, X_2,  Y_1, Y_2,  Z_1$ and $Z_2$ that satisfy the inequalities in Proposition \ref{complete_cond} given a fixed number of voters $V$, as $V \rightarrow \infty$.  In the partial ballot case, Normaliz, does the same thing for $X_1, X_2, X_3, Y_1, Y_2, Y_3, Z_1, Z_2$ and $Z_3$ that satisfy the inequalities in Proposition \ref{partial_cond}. 

The probabilities in the following proposition are thus exact (rather than approximated through simulation) and are obtained directly from Normaliz by inputting the appropriate sets of inequalities. The conditional probabilities are obtained by adding an additional requirement to limit the number of elections to those without a majority candidate.

\begin{proposition}\label{IAC_prop}
The limiting probabilities that an election is susceptible to a reinforcement paradox under the Dir$(1)$ model (the IAC model) are as shown in Table \ref{IAC_results} for complete and partial ballots. The  ``Cond. Prob.'' probabilities are conditioned on the absence of a majority candidate.


\begin{table}[tbh]
\renewcommand{\arraystretch}{1.5}
\begin{tabular}{l|c|c|c|c}
&PL or RU & RU & PL & PL and RU\\
\hline
Prob. & $\frac{191}{768}\approx 24.9\%$  & $\frac{127}{576} \approx 22.5\%$ & $\frac{79}{1152} \approx 6.9\%$ & $\frac{31}{768} \approx 4.0\%$ \\
Cond. Prob.  & $\frac{191}{336}\approx 56.8\%$ & $\frac{127}{252} \approx 47.2\%$ & $\frac{79}{504}\approx 15.7\%$ & $\frac{31}{336}\approx 9.2\%$\\
\hline
\hline
Prob.& $\frac{1341257}{5308416} \approx 25.3\%$, & $\frac{527}{2304} \approx 22.9\%$ & $\frac{7031}{110592} \approx 6.4\%$ & $\frac{210439}{5308416} \approx 4.0\%$\\
Cond. Prob. & $\frac{1341257}{3006720}\approx 44.6\%$ & $\frac{527}{1305} \approx 40.4\%$ & $\frac{7031}{62640}\approx 11.2\%$ & $\frac{210439}{3006720}\approx 7.0\%$\\
\end{tabular}
\caption{Probabilities under the Dir$(1)$ (IAC) model. The top two rows correspond to the complete ballot case; the bottom two rows correspond to the partial ballot case. }
\label{IAC_results}
\end{table}
\end{proposition}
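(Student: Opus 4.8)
The plan is to reduce each reported probability to an exact volume ratio for a finite union of rational polytopes, and then to evaluate those volumes exactly.

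First I would set up the limiting model. Under Dir$(1)$ (IAC), as $V\to\infty$ the vector of vote proportions is distributed according to normalized Lebesgue measure on the standard simplex: the $5$-simplex $\Delta=\{(X_1,X_2,Y_1,Y_2,Z_1,Z_2)\ge 0:\sum=1\}$ in the complete-ballot case and the $8$-simplex in the partial-ballot case. Hence the limiting probability of any full-dimensional event is $\mathrm{vol}(R)/\mathrm{vol}(\Delta)$ for the corresponding region $R$. Every condition we need is \emph{linear} in the profile coordinates — the WLOG hypotheses defining ``$A$ is the IRV winner and $C$ is the plurality loser'' ($Z<X$, $Z<Y$, $Y+Z_2<X+Z_1$, and their partial-ballot analogues), the runner-up conditions (1)–(2) of Propositions \ref{complete_cond}/\ref{partial_cond}, and the plurality-loser conditions — so each region of interest is a finite union of rational polytopes, and its volume is a rational number.

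Next I would exploit symmetry to cut down to the normalized region. A generic profile has a well-defined IRV winner, runner-up, and plurality loser, and the IAC measure is invariant under relabeling candidates (relabeling permutes the six ranking-types, resp. the nine ballot-types, among themselves). The six label assignments therefore give six congruent, pairwise disjoint (up to measure zero) cells, so for each paradox type
\[ \Pr[\text{paradox of that type}] \;=\; 6\cdot\mathrm{vol}\!\left(R_{\text{type}}\right)\big/\mathrm{vol}(\Delta), \]
where $R_{\text{type}}$ is the part of the simplex on which $A$ wins, $C$ is the plurality loser, and the relevant conditions of Proposition \ref{complete_cond} (resp. \ref{partial_cond}) hold. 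Here $R_{\mathrm{RU}}$ is the union of the two polytopes from Cases (1) and (2), $R_{\mathrm{PL}}$ is a single polytope, ``PL and RU'' is their intersection, and ``PL or RU'' is handled by inclusion–exclusion, $\mathrm{vol}(R_{\mathrm{PL}}\cup R_{\mathrm{RU}})=\mathrm{vol}(R_{\mathrm{PL}})+\mathrm{vol}(R_{\mathrm{RU}})-\mathrm{vol}(R_{\mathrm{PL}}\cap R_{\mathrm{RU}})$; one can verify the numbers in Table \ref{IAC_results} satisfy this identity (and the analogous one for the two cases of the RU paradox), which is a useful consistency check.

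Then I would evaluate the volumes. In principle this is the continuous analogue of the Fourier–Motzkin elimination carried out in the proof of Proposition \ref{complete_cond}: integrate out the coordinates one at a time against the bounds produced there. In practice the clean route — and the one we take — is to feed the inequality systems to Normaliz \cite{BISO}, which returns exact rational volumes (equivalently leading Ehrhart coefficients) of rational polytopes, yielding the fractions in Table \ref{IAC_results}. Finally the ``Cond.\ Prob.'' entries are obtained by dividing by $\Pr[\text{no majority candidate}]$, an elementary Dirichlet-marginal computation: the sum of $2$ of the $6$ complete-ballot coordinates is $\mathrm{Beta}(2,4)$ and the sum of $3$ of the $9$ partial-ballot coordinates is $\mathrm{Beta}(3,6)$, giving normalizing constants $\tfrac{7}{16}$ and $\tfrac{3006720}{5308416}$ respectively (these too can be obtained from Normaliz).

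The main obstacle is not mathematical depth but bookkeeping and verification: writing down the polytopes exactly (the RU case split, the precise WLOG inequalities, the factor $6$) and then trusting an exact-volume computation rather than a hand derivation. The sanity checks above — the inclusion–exclusion identities, the fact that the complete-ballot fractions are the $X_3=Y_3=Z_3$ specialization of the partial-ballot ones, and agreement with the Monte Carlo estimates reported elsewhere in Section \ref{models_section} — are what guard against input errors.
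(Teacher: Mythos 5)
Your proposal is correct and follows essentially the same route as the paper: the authors likewise obtain these numbers by feeding the inequality systems of Propositions \ref{complete_cond} and \ref{partial_cond} (together with the WLOG/no-majority constraints) to Normaliz, with the symmetry factor of $6$ and inclusion--exclusion handling exactly as you describe. Your explicit derivation of the conditioning constants $7/16$ and $145/256$ and the consistency checks are correct and consistent with the reported fractions.
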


The  probabilities of one or more paradoxes occurring are similar in the complete and partial ballots cases. When conditioned on the absence of a majority candidate, the probabilities are significantly lower in the  partial ballot case. This is because  the probability an election does not contain a majority candidate is $145/256=56.64\%$ in the 8-simplex but is only $7/16=43.75\%$ in the 5-simplex. Overall, the probabilities are fairly high. This is largely because they  indicate the percentage of elections in which there exists a partition into two sub-elections demonstrating a reinforcement paradox; the numbers do not indicate anything about the number of such partitions in any given election.

For finite $\alpha \neq 1$  we are unaware of analytical techniques such as those that underlie Normaliz to calculate exact probabilities for distributions of the simplex other than IAC. Therefore, we use Monte Carlo simulations. Figure \ref{simulation_figure} illustrates the probabilities obtained from simulations for Dir$(\alpha)$ where $\alpha \in \{0.1, 0.2, 0.3, \dots, 4.0\}$. The top image  indicates unconditioned probabilities and the bottom image indicates probabilities conditioned on the absence of a majority candidate. The blue squares give the probability that an election is susceptible to a reinforcement paradox (of either type), the red disks give the probability an election is susceptible to a runner-up paradox, the black diamonds give the probability an election is susceptible to a plurality loser paradox, and the green pluses give the probability an election is susceptible to a double paradox.

\begin{figure}

\includegraphics[scale=0.8]{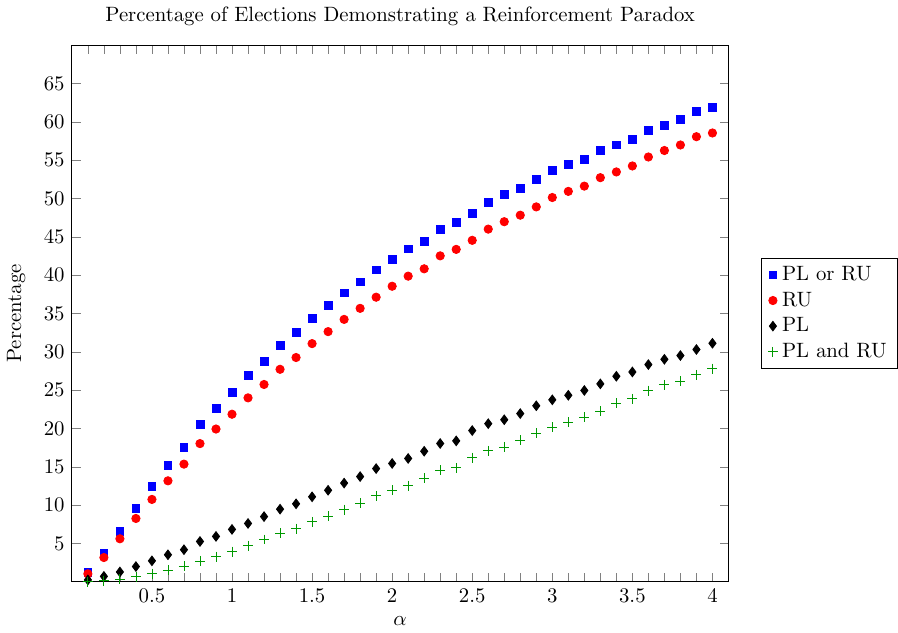}

\includegraphics[scale=0.8]{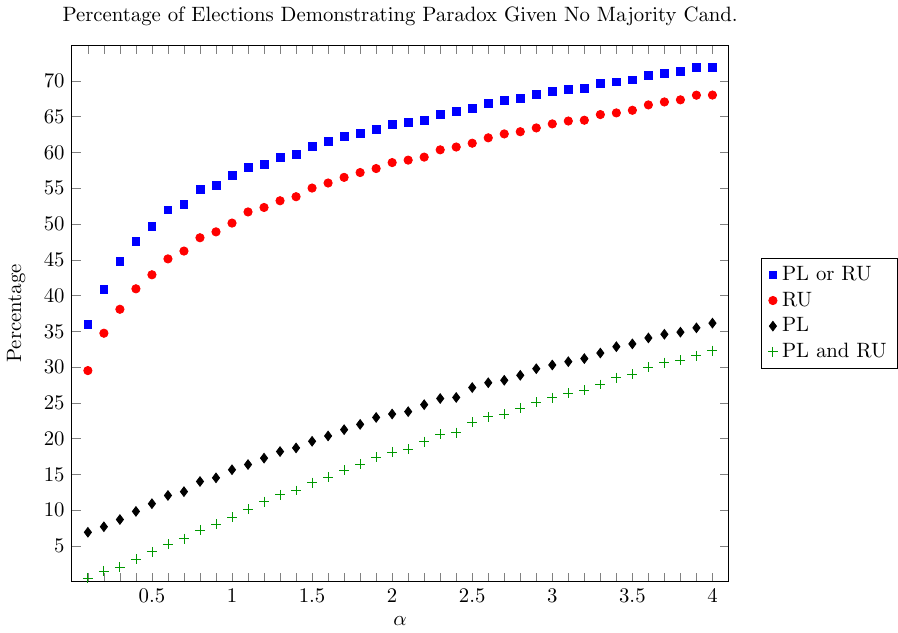}
\caption{Estimated probabilities that an election demonstrates a reinforcement paradox under the model Dir$(\alpha)$. The top figure shows unconditioned probabilities while the bottom figure shows probabilities conditioned on the absence of a majority candidate.}
\label{simulation_figure}
\end{figure}

As $\alpha$ increases, all four probabilities also increase; this is unsurprising because as $\alpha$ increases, a typical generated election becomes ``closer'' or ``more competitive.'' When $\alpha$ is small, many generated elections represent landslide victories where one candidate has a strong majority of first-place votes. When $\alpha$ is large, the generated elections tend to cluster  in the center of the simplex where there is no majority candidate. This also explains why we see a large increase in probabilities when moving from the top of Figure \ref{simulation_figure} to the bottom when $\alpha$ is small (when there are few elections without a majority candidate) but not when $\alpha$ is large (when almost all elections have no majority).

Table \ref{simluation_table} provides specific results from the simulations for a few choices of $\alpha$. Recall that 100,000 simulations were run for each choice of $\alpha$, so that when $\alpha=0.1$, for instance, only 3.5\% of generated elections contained no majority candidate. Note that we include simulations for $\alpha=1$ even though we have exact probabilities from Proposition \ref{IAC_prop}.

Table \ref{simluation_table} also displays information for three large $\alpha$ values which are not shown in Figure \ref{simulation_figure}. Recall that as $\alpha \rightarrow \infty$ the distribution resembles the IC model, in which each voter independently chooses one of the six possible rankings at random. For comparison, we also conducted simulations using the IC model  with $V=1001$, which returned an estimated probability of 1.0 for the runner-up, plurality, or both  (RU, PL, and RU and PL) paradoxes occurring. Intuitively, this is clear since  when every voter chooses a ranking at random, $X_1\approx X_2 \approx Y_1\approx Y_2 \approx Z_1\approx Z_2$. Thus, in the limit, Condition (2) and the plurality loser inequalities of Proposition \ref{complete_cond} will always be satisfied. It is unsurprising, therefore, that the probabilities in Figure \ref{simulation_figure} appear to be trending toward 1 and the values at the bottom of Table \ref{simluation_table}  indicate   convergence to 1 in all cases.

\begin{table}

\begin{tabular}{c|c|c|c|c|c}
$\alpha$ & PL or RU & RU & PL & PL and RU & No Maj. Cand.\\
\hline
0.1 & 1257 & 1033  &242  & 18 & 3500\\
0.5 & 12419 & 10737  &2730  & 1048 & 24024\\
1.0 & 24745 & 21852  & 6825 & 3932 & 43590\\
1.5 & 34319 & 31065  & 11082 & 7828 & 56455\\
2.0 & 42038 & 38548  & 15426 & 11936 & 65795\\
2.5 & 48079 & 44538  & 19721 & 16180 & 72663\\
3.0 & 53702 &50123   & 23732 & 20153 & 78324\\
\hline
10.0 & 85618& 83538 & 62924 &60844 &98498\\

20.0 &96383 &95392  & 86120 &85219 &99948\\

30.0&98814 & 98442  & 94367 & 93955 & 99997\\
\hline
$\infty$ (IC) & 100000 & 100000 & 100000 & 100000 & 100000\\

\end{tabular}
\caption{Simulation results for some choices of $\alpha$. For each $\alpha$ we generated 100,000 profiles.}
\label{simluation_table}
\end{table}

The 1D spatial models provide another way to estimate the probabilities of a reinforcement paradox, in part because the preferences they represent are single-peaked. Thus, in simulations in which the positions of the candidates $A$, $B$ and $C$ are ordered left to right, for instance,   $X_2=Z_1=0$ since $B$ must be either first or second-ranked by all voters. 

The results under all spatial models are shown in Table \ref{spatial_results}. The first row gives unconditioned probabilities; the second row gives probabilities conditioned on the absence of a majority candidate for 1D spatial. The middle (respectively last) two rows give the corresponding probabilities under 1D Bimodal (respectively 1D Bimodal Weighted).

Interestingly, the corresponding probabilities under 1D Spatial and 1D Bimodal are approximately equal, perhaps because both models are symmetric about zero. When we unbalance the distribution under 1D Bimodal Weighted, the probabilities change significantly. In particular, 1D Bimodal Weighted is the only model in this study under which it is more likely to observe a plurality loser paradox than a runner-up paradox. We do not know with certainty the reason for this, but we speculate that it is due to different responses to the so-called ``center squeeze'' \cite{P13}, where the centrist candidate is the Condorcet winner but is not the IRV winner. In one-dimensional three-candidate spatial models, the central candidate will often be the plurality loser even if they are closest to the median voter.  Using the notation of Section \ref{section:conditions}, if $C$ is the central candidate then $X_1=Y_1=0$ since $C$ must be second-ranked by all voters.  Hence the conditions in Case (1) of Proposition (1) cannot be satisfied (since they imply $B$ has the majority of the first-place vote). The conditions in Case (2) reduce to
$$Y>X-Z_2 \quad \mbox{ and } \quad Y>X_2+Z_1-Z_2= X+Z_1-Z_2.$$

If the distribution of voters is symmetric, $Z_1$ and $Z_2$ are likely to be similarly-sized, and so the rightmost inequality will be satisfied whenever $Y>X$.  Thus, a runner-up (RU) paradox will  occur often when the IRV and plurality winners are distinct.

If the distribution of voters is not symmetric, as with the  1D Bimodal Weighted  model,  the left and right-wing candidates are likely to be near the left and right peaks of the distribution respectively, with the centrist  in between but on the ``upward slope'' of the larger peak (the left one in Figure \ref{bimodal_figure}). Thus, the quantity $Z_1-Z_2$ is more likely to be larger than in the symmetric case, decreasing the likelihood of Case (2) occurring.  A similar argument holds if $A$ or $B$ is the central candidate.

For the plurality loser (PL) paradox to occur, it is easily seen that $C$ must be the central candidate. In this case, $C$ is again  likely to be on the upward slope of the larger peak in the 1D Bimodal Weighted model (and hence $Z$ is likely to be larger) than in the 1D Bimodal model. So the probability of the plurality loser (PL) paradox  occurring is greater  in the nonsymmetric case than the symmetric.

Under all three 1D spatial models, the probability of a double paradox (both RU and PL) is extremely small as it is rare for both $Y$ and $Z$ to be sufficiently large simultaneously.

\begin{table}
\begin{tabular}{l|c|c|c|c}
&PL or RU & RU & PL & PL and RU\\
\hline
Unconditioned Probability& 16.63\% & 12.17\% & 4.47\% & 0.01\%\\
Given No Majority Cand. & 40.07\% & 29.32\% & 10.76\% & 0.01\%\\
\hline
\hline
Unconditioned Probability& 16.68\% & 12.52\% & 4.18\% & 0.02\%\\
Given No Majority Cand. & 34.05\% & 25.55\% & 8.53\% & 0.03\%\\
\hline
\hline
Unconditioned Probability& 11.46 & 5.05 & 6.46 & 0.05\\
Given No Majority Cand. & 26.48 & 11.67 & 14.92 & 0.12\\
\end{tabular}
\caption{Estimated probabilities, reported as percentages, under our three spatial models. The top two rows give probabilities for 1D Spatial, the middle two rows for 1D Bimodal, and the last two for 1D Bimodal Weighted.}
\label{spatial_results}
\end{table}

\section{Empirical Results}\label{empirical}

In this section we provide empirical results using a large database of real-world IRV elections. Our data comes from three sources. First, we use single-winner political elections from the United States with at least three (not write-in) candidates. This data is available at the FairVote data repository \cite{O22}, and contains ballot data from municipal elections in cities such as San Francisco, CA and Minneapolis, MN. The repository also has ballot data for federal elections in the states of Alaska and Maine. Second, we use single-winner elections from the American Psychological Association (APA), which generally is willing to share ballot data for ranked-choice APA presidential elections and elections for the Board of Directors. Some of this data is available on preflib.org \cite{Preflib}, and some was shared directly with the first author. Third, we use single-winner political elections from a database of Scottish local government elections. The ballot data for many of these elections is available at \url{https://github.com/mggg/scot-elex}. The Scottish elections at this repository occurred in 2022 or before. We have separately collected preference profiles for several off-cycle IRV by-elections which have occurred since 2022. These profiles are available by request.
 
 In total, we accessed the preference profiles for 416 single-winner IRV elections with at least three candidates.\footnote{Any reference to the number of candidates in an election does not include write-ins. In our dataset, write-in candidates receive an insignificant amount of votes and ignoring them does not affect the analysis.} Many of these elections contain more than three candidates; because we are interested in the three-candidate case, for each election we ``reduce'' the number of candidates by running the IRV algorithm until only three candidates remain and then analyze the resulting profile. Note that  the relationship between the original election and the resulting three-candidate election is not clear with respect to the susceptibility to  reinforcement paradoxes. It is \emph{a priori} possible that the original election demonstrates a paradox while the reduced three-candidate election does not, and vice versa. We use this methodology because we are interested in three-candidate elections and analyzing the elections with more than three candidates using  the IRV algorithm in this  way allows us to expand the size of our three-candidate election dataset. Moreover, in some cases if the resulting three-candidate is susceptible to a paradox then we can gain insight into how the original election might be susceptible to a paradox; this is how a reinforcement paradox in a five-candidate Minneapolis city council election was found, for example \cite{MM23}.

 \begin{table}[tbh]
\begin{tabular}{p{6.2cm}|cccc}
Jurisdiction & USA & APA & Scotland & Total \\
\hline
Elections & 331 & 27 & 58 & 416 \\
\hline
3-Cand. Elections & 121 & 10 & 1& 132 \\
\hline
3-Cand. Elections, no Maj. Cand. & 57 & 6 & 1 & 64 \\
\hline
3-Cand. Reduced Elections, no Maj. Cand.  & 206 & 20 & 48 & 274 \\
\end{tabular}
\caption{A summary of our data sources.}
\label{data_info}
\end{table}

\subsection{Results}

Table \ref{empirical_results} provides the results for elections having three candidates in the original election and for all  416 elections. We use the same abbreviations RU and PL as in Section \ref{models_section}. The first row indicates the number and percentage of elections exhibiting one or more paradoxes among the three-candidate elections. For example, of the 132 elections which originally contained three candidates,  19 of them (14.4\%)  exhibit a reinforcement paradox. The second row indicates the same information but for all the elections including  the reduced elections.   The third and fourth rows indicate the percentage of elections that exhibit the given type of paradox, conditioned on the absence of a majority candidate. 

Unlike the simulated elections in Section \ref{models_section}, all the elections in the database  that demonstrate a plurality loser paradox also demonstrates a runner-up paradox. (Note that the  two rightmost columns of Table \ref{empirical_results} are identical.) The most likely explanation is that in real elections  runner-up candidates tend to be  much stronger than plurality losers.

 \begin{table}
\begin{tabular}{p{4.8cm}|c|c|c|c}
 & PL or RU & RU & PL & PL and RU\\
\hline

 3-Cand. Elections  & 19, 14.4\% &   19, 14.4\% & 7, 5.3\% &  7, 5.3\%\\

 3-Cand. Reduced Elections & 103, 24.8\% & 103, 24.8\% & 30, 7.2\%  & 30, 7.2\% \\

 \hline
 \hline

 3-Cand. Elections, no Maj. Cand. & 29.7\% & 29.7\% & 10.9\% & 10.9\% \\

3-Cand Reduced Elections, no Maj. Cand. & 37.6\% & 37.6\%  & 10.9\%  & 10.9\%\\
\end{tabular}
\caption{Reinforcement paradox results for the 132 real-world  three-candidate elections and for all 416  ``reduced'' elections.}
\label{empirical_results}
\end{table}

In Section \ref{threshold_results}, we showed in Proposition \ref{threshold_results} that if an election demonstrates a runner-up reinforcement paradox then the runner-up could receive as little as 25\% of the first-place votes and the plurality loser could receive an arbitrarily small percentage of the first-place votes. Similarly, if an election demonstrates a plurality loser paradox then the plurality loser can receive as little as 25\% of the first-place votes. How close to these minima do we come in real-world elections? The lowest percentage of first-place support earned by the plurality loser in an election which demonstrated a reinforcement paradox is 3.6\%, which occurred in the 2023 city council ward 7 election in Minneapolis. Table \ref{Minn_Ward7_profile} shows the preference profile for this election, which  originally contained three (not write-in) candidates; that is, the profile displayed is not obtained from eliminating previous official candidates. The three candidates were Cashman ($C$), Graham ($G$), and Foxworth ($F$). Foxworth earns only 280 (3.6\%) first-place votes of the 7969 ballots cast, yet the runner-up inequalities in Condition (2) of Proposition \ref{partial_cond} are satisfied.

\begin{table}

\begin{tabular}{l|ccccccccc}
Num. Voters & 1440 & 1217 & 1212 & 1473 & 611 & 1726 & 185 & 68 & 37\\
\hline
1st & $C$ &$C$ &$C$ & $G$ &$G$ &$G$ &$F$ &$F$ &$F$\\ 
2nd& $G$ & $F$ & & $C$ & $F$& & $C$ & $G$ & \\
3rd & $F$ & $G$ & & $F$ & $G$ & & $G$ & $C$ & \\

\end{tabular}
\caption{The preference profile for the 2023 city council ward 7 election in Minneapolis.}
 \label{Minn_Ward7_profile}
\end{table}

This 2023 Minneapolis city council election is an interesting example of a reinforcement paradox, even though it initially appears unremarkable if we are hunting for voting paradoxes. In this election the plurality, Condorcet, and IRV winners are identical, ruling out the possibility of classic issues like a monotonicity paradox, spoiler effect, or similar pathology. While the election is close—Cashman and Graham receive a comparable number of first-place votes—Cashman is the clear winner. From a social choice perspective, this election typically warrants no further analysis. However, the presence of a reinforcement paradox makes this election more compelling than it might appear at first glance.

The lowest first-place vote percentage earned by the IRV runner-up in an election which demonstrates a runner-up paradox is 28.9\%, achieved in the 2018 mayoral election in San Francisco. This election had 9 official candidates, and thus the 28.9\% represents the vote share of the runner-up after eliminating six candidates. The lowest percentage for a plurality loser in an election which demonstrates a plurality loser election is 26.9\%, achieved in the 2008 County Executive election in Pierce County, Washington. This election had four official candidates, and the plurality loser earned 26.9\% of the voter after eliminating one candidate. 

\subsection{Precinct and District Partitions}\label{section:precincts}

As discussed in Section \ref{prelims}, some instances of a reinforcement paradox are more serious than others. Our approach thus far has been to allow for arbitrary partitions of the electorate, including partitions in which the two subsets of voters share no meaningful characteristics. We take this approach because in many cases the only available data is a preference profile, which provides no information beyond aggregate preference rankings. Ideally, we would search for partitions defined by meaningful variables; for example, we could attempt to find instances where candidate $A$ wins among voters with a college degree and among voters without a college degree, but loses overall. However, our data does not contain such information. Some of our data does include limited geographic information, such as the precinct at which a voter cast their ballot. Some data files also include district information, where a ``district'' is a collection of precincts. For example, election files from Alaska indicate a voter's precinct and their state House district. In this section, we restrict attention to partitions of the electorate that respect precinct or district boundaries.

In the 416 real-world elections comprising our dataset, none of the APA or Scottish elections contain any information beyond the preference profile. Of the American political elections, 304 contain precinct information and 84 indicate a voter's precinct and district.  Thus, for 304 elections we can search for precinct partitions and for 84 we can search for district partitions.

If an election contained 16 or fewer precincts then we exhaustively searched through all precinct partitions and checked for a paradox. Of course, this brute force approach is not feasible for a large number of precincts, and some of our elections contained hundreds of precincts. Thus, we instead formulate the search for reinforcement paradoxes as a binary integer linear feasibility problem.

Our methodology is as follows. First, we reduce the cast vote record\footnote{A \emph{cast vote record}, in contrast to a preference profile, is a record of each voter's preference ranking of the candidates. We use this object for our computations in this section because the preference profile might combine voters across precincts, potentially losing geographic information.} to three candidates, as we did in previous sections.  We then construct a precinct-level summary of the reduced election. For each precinct, we record the number of first-place votes received by each of the three candidates, as well as the number of ballots that would transfer from one candidate to another upon elimination. These quantities are sufficient to determine the outcome of any two-round elimination process among the three candidates within any subset of precincts.
Next, we reduce the search for an aggregation paradox to a feasibility problem. Specifically, we construct a binary integer linear program (ILP) in which each decision variable indicates whether a given precinct is assigned to one of two subsets. Linear constraints are imposed to ensure that both subsets are nonempty and that, within each subset, the induced vote totals satisfy the inequalities corresponding to a prescribed IRV elimination order. These constraints encode both the identification of the first-round loser and the outcome of the final round after vote transfers.

A feasible solution to this ILP therefore corresponds to a partition of precincts into two subsets such that each subset yields a specified elimination order, potentially differing from the outcome of the election as a whole. We solve this feasibility problem using a standard branch-and-bound algorithm. If a feasible solution exists, we conclude that the election exhibits an aggregation paradox of the specified type. We independently verified the precinct partition results produced by this methodology that are presented in the article. See \cite{BW05, W98} for more details about the implementation of ILPs.

Table \ref{table:precinct_district_results} provides our precinct and district results. The percentages of paradoxes are much smaller than when we look for arbitrary partitions, suggesting that the more serious kind of reinforcement paradox is much less common than instances in which we can choose an arbitrary partition. For several of the elections demonstrating a precinct or district-level paradox, we were unable to find a ``nice'' partition of precincts or districts corresponding to a north-south kind of outcome. However, we identified several elections with relatively clean geographic partitions which we discuss below.

To streamline notation, we write $\mathcal{B}_1$ as a collection of precincts and let $\mathcal{B}_2$ be the precincts not in $\mathcal{B}_1$, which we will denote as a complement $\mathcal{B}_2=\mathcal{B}_1^C$.
Two of the four examples below come from reduced elections which originally had more than three candidates. In these cases, we also determine whether  the partition produces a paradox with all candidates still present.  We indicate if keeping all candidates does not produce a paradox. 

In the first three examples,  the number of precincts was small enough to manually check all possible precinct partitions. In each case, there were multiple precinct partitions that resulted in a runner up paradox but none that resulted in a plurality loser paradox. 

\begin{table}
\centering
\begin{tabular}{l|c|c|c|c}
&PL or RU & RU & PL & PL and RU\\
\hline
Precinct & 9, 3.0\% & 8, 2.6\% & 1, 0.3\% & 0\\
\hline
District & 1, 1.2\% & 0 & 1, 1.2\% & 0 \\
\end{tabular}
\caption{The number of elections demonstrating a reinforcement paradox when using a partition of precincts or districts. The corresponding percentage uses a denominator of 304 for the first row and 84 for the second.}
\label{table:precinct_district_results}
\end{table}

\begin{table}
\centering
\begin{tabular}{ll|ccccccccc}
\multicolumn{2}{c|}{Num. Voters} 
& 723 & 1260 & 636 
& 1076 & 141 & 1191 
& 1475 & 126 & 1778 \\
\hline
\multicolumn{2}{c|}{1st choice}  
& $C$ & $C$ & $C$ 
& $D$ & $D$ & $D$ 
& $SY$ & $SY$ & $SY$ \\
\multicolumn{2}{c|}{2nd choice}  
& $D$ & $SY$ &  
& $C$ & $SY$ &  
& $C$ & $D$ &  \\
\multicolumn{2}{c|}{3rd choice}  
& $SY$ & $D$ &  
& $SY$ & $C$ &  
& $D$ & $C$ &  \\
\hline
\hline
\multirow{2}{*}{Partition} 
& $\mathcal{B}_1$ 
& 372 & 589 & 340 
& 573 & 65 & 723 
& 633 & 50 & 641 \\
& $\mathcal{B}_2$ 
& 351 & 671 & 296 
& 503 & 76 & 468 
& 842 & 76 & 1137 \\
\end{tabular}
\caption{Three-candidate reduced preference profile for the 2021 Portland, ME At-Large Seat 1 Commissioner election. $\mathcal{B}_1$ represents voters from precincts $\{$Absentee, 4-2, 5-1, 5-2$\}$ and $\mathcal{B}_2$ represents voters from all other precincts. Sheikh-Yousef ($SY$) wins using ballots from $\mathcal{B}_1$ or from $\mathcal{B}_2$ but loses the overall election to Chann ($C$).}
\label{table:Portland_ME_table}
\end{table}

\begin{example} \textbf{2021 Portland, ME At-Large Seat 1 Commissioner Election.} This election is an outlier in our dataset as it was a multi-winner election using a voting method  called ``sequential ranked choice voting'' or ``sequential IRV'' to elect four winners from an initial pool of eleven candidates. Under this method, the first seat is allocated using IRV, and the winner is then removed from the preference profile. The second seat is then allocated to the IRV winner using the modified profile, after which this second winner is also removed. The process continues until all seats are filled. (See \cite{MMLS24} for an in-depth analysis of this voting method.) 

Although the election involved multiple winners, we include the election for the first seat  in our dataset because the winner was chosen using IRV, and there is arguably some psychological or other non-tangible benefit to winning the first seat. In the election for the first seat, Marpheen Chann was the IRV winner and Nasreen Sheikh-Yousef was the runner-up.

Portland has eleven precincts (see Figure \ref{figure:portland_ME} in the Appendix). In addition, the City of Portland also groups all absentee ballots into their own precinct with no obvious geographic identification.  Aside from this ambiguity, the two most geographically interesting partitions are

\begin{itemize}
\item $\mathcal{B}_1=\{$Absentee, 4-2, 5-1, 5-2$\}$, $\mathcal{B}_2=\mathcal{B}_1^C$. This partition is described in Table \ref{table:Portland_ME_table}.
\item $\mathcal{B}_1=\{$Absentee, 1-1, 1-2, 1-3, 2-1$\}$, $\mathcal{B}_2=\mathcal{B}_1^C.$
\end{itemize}

If we count Absentee ballots as coming from the northwest then in the first partition, Sheikh-Yousef won the northwest part of Portland and she won the non-northwest part, yet she lost the first seat. Similarly, if we count Absentee ballots as coming from the east then in the second partition, Sheikh-Yousef won the east and she won the west, yet again she loses overall.
    
\end{example}

\begin{example}\textbf{2022 Oakland District 4 School Direction Election.} This election contained the candidates Mike Hutchison (IRV winner), Nick Resnick (IRV runner-up), and Pecolia Manigo (plurality loser). Figure \ref{figure:Oakland_D4} in the Appendix shows the precinct map of District 4, which contains sixteen precincts. The four most geographically interesting partitions generating a runner up paradox are 
\begin{itemize}
\item $\mathcal{B}_1=\{$213100, 213500$\}$, $\mathcal{B}_2=\mathcal{B}_1^C$ 
\item$\mathcal{B}_1=\{$213500$\}, \mathcal{B}_2=\mathcal{B}_1^C$
\item$\mathcal{B}_1=\{$245300$\},\mathcal{B}_2=\mathcal{B}_1^C$
\item$\mathcal{B}_1=\{$345900$\},\mathcal{B}_2=\mathcal{B}_1^C.$
\end{itemize}

In the first partition, $\mathcal{B}_1$ consists of the two northernmost precincts. Thus, this partition represents a runner-up paradox with a clear north-south split in which Resnick wins the northern voters and separately wins the non-northern voters, yet loses the overall election. Similarly, the other three partitions represent instances when Resnick wins in an individual precinct, as well as winning outside the precinct, yet loses the overall election.

\end{example}

The next example was analyzed in \cite{MM23}. Reinforcement paradoxes are not the focus of that article, but the authors point out a reinforcement paradox can be achieved using the precinct partition $\mathcal{B}_1=\{$2-1, 2-5$\}$, $\mathcal{B}_2=\mathcal{B}_1^C$. The two partitions given below are more geographically interesting.

\begin{example} \textbf{2021 Minneapolis Ward 2 City Council Election.} This election contained the candidates Tom Anderson, Yusra Arab, Cam Gordon, Guy Gaskin, and Robin Worlobah. Anderson and Gaskin received relatively few first-place votes and were eliminated in the first two rounds. With three candidates remaining, the IRV winner is Worlobah, the runner-up is Arab, and the plurality loser is Gordon.

Figure \ref{figure:Minneapolis_Ward2} in the Appendix shows the precinct map of Ward 2 which contains eleven precincts. The two most geographically interesting partitions generating a runner up paradox are 
\begin{itemize}
\item $\mathcal{B}_1=\{$2-3, 2-4, 2-8, 2-11$\},\mathcal{B}_2=\mathcal{B}_1^C$ 
\item$\mathcal{B}_1=\{$2-1, 2-2, 2-3, 2-5, 2-6$\}, \mathcal{B}_2=\mathcal{B}_1^C.$
\end{itemize}

The first partition produces a runner-up paradox in the reduced three-candidate election and in the original election with all five candidates.  In that partition, $\mathcal{B}_1$ represents a coalition of northernmost and southernmost precincts with $\mathcal{B}_2$ representing the second reamining voters.  Thus, in this election we can say that Arab wins the northernmost and southernmost voters, and wins separately with all other voters, but she loses the actual election.
The second partition produces a paradox only in the reduced three-candidate election.  

\end{example}

\begin{table}
\centering
\resizebox{\textwidth}{!}{
\begin{tabular}{ll|ccccccccc}
\multicolumn{2}{c|}{Num. Voters} 
& 27070 & 15478 & 11262 
& 34078 & 3659 & 21237 
& 47419 & 4647 & 23733 \\
\hline
\multicolumn{2}{c|}{1st choice}  
& $B$ & $B$ & $B$ 
& $Pa$ & $Pa$ & $Pa$ 
& $Pe$ & $Pe$ & $Pe$ \\
\multicolumn{2}{c|}{2nd choice}  
& $Pa$ & $Pe$ &  
& $B$ & $Pe$ &  
& $B$ & $Pa$ &  \\
\multicolumn{2}{c|}{3rd choice}  
& $Pe$ & $Pa$ &  
& $Pe$ & $B$ &  
& $Pa$ & $B$ &  \\
\hline
\hline
\multirow{2}{*}{Partition} 
& $\mathcal{B}_1$ 
& 6148 & 3328 & 2977 
& 9506 & 1116 & 7025 
& 7895 & 830 & 3712 \\
& $\mathcal{B}_2$ 
& 20922 & 12150 & 8285 
& 24572 & 2543 & 14212 
& 39524 & 3817 & 20021 \\
\end{tabular}}

\caption{An example of a state House district partition demonstrating a plurality loser paradox in the 2022 Alaska Special House election.}
\label{table:AK_partition}
\end{table}
We found one election demonstrating a plurality loser (but not runner-up) paradox at the precinct and district level: the 2022 Special House election in Alaska. We analyze this election in our final example.

\begin{example} \textbf{2022 Alaska Special House Election.} This election contained the three candidates  Mary Peltola (IRV winner), Sarah Palin (IRV runner-up), and Nick Begich (plurality loser). Since the voter preference file contains state House district data as well as precinct information, we searched for partitions involving both of these geographic units. 

Alaska has 40 state House districts as well as a 41st ``district'' consisting of overseas ballots. There are hundreds of partitions of the House districts demonstrating a paradox where Begich wins each sub-election, but none where Palin does. (The same result holds for precinct partitions.) A likely explanation is that Begich was the election's Condorcet winner and thus the election's ``strongest candidate'' in some sense, while Palin was the Condorcet loser.

 We could not find a clean north-south type of partition of the House districts demonstrating the plurality loser paradox. An example of an aesthetically pleasing partition can be found in the top of Figure \ref{figure:AK_map} in the Appendix, corresponding to the partition detailed in Table \ref{table:AK_partition}. The black region corresponds to the district partition $\mathcal{B}_1= \{$Overseas,  1, 7, 8, 18, 22, 26, 30, 32, 33, 37, 38,  40$\}$. The gray region  corresponds to  $\mathcal{B}_2=\mathcal{B}_1^C$. We could not find a partition in which the two sub-electorates  both represented contiguous regions, mainly because House district 1 in the southeast corner of Alaska tended to be paired with regions very far away, such as districts around Fairbanks or the far-north Arctic district, creating a gap between District 1 and the rest of the districts in its partition subset. The partition we found which comes closest to producing two contiguous regions is shown in the bottom of Figure \ref{figure:AK_map}, where Begich wins the black districts and also wins the gray districts grouped with the overseas ballots.
\end{example}

 This Alaska election is one of only two single-seat elections in our American dataset in which the Condorcet winner is not the IRV winner, perhaps explaining why there are geography-based plurality loser paradoxes but none for the runner-up (who is the Condorcet loser).  The other American election in our dataset in which the Condorcet winner is the plurality loser (the 2009 mayoral election in Burlington, VT) does not have precinct or district information. This election does demonstrate runner-up and plurality loser paradoxes but we cannot check for ``natural'' or partitions of the electorate.

\subsection{Bootstrapping Results}

In Section \ref{models_section} we ran simulations based on sampling using theoretical models such as the IAC model. Since we have access to a large dataset of real-world elections, we can also sample from these elections and run simulations using nonparametric bootstrap analysis.  Such sampling from real-world elections allows for an empirically-grounded version of the sampling methodology in \ref{models_section}.

Bootstrapping allows us to generate random samples from each election profile  to create a set of ``pseudoprofiles.'' Analysis of these pseudoprofiles can give a sense of how robust the presence or absence of a paradox is under perturbations of the election profile. For instance, if we sample from an election profile that does not demonstrate a reinforcement paradox and determine that none of the sampled pseudoprofiles demonstrates a paradox, then the original election is not close to an election that demonstrates a paradox. On the other hand, if an election demonstrates a paradox but relatively few of its pseudoprofiles do, this indicates that the paradox is sensitive to small perturbations and is unlikely to persist under resampling. This is the case, for example, in the 2023 city council ward 7 election in Minneapolis (Table \ref{Minn_Ward7_profile}). The election demonstrates a reinforcement paradox but only 337 of 1000 generated pseudoprofiles do. Thus, very similar elections are unlikely to demonstrate a paradox.

To perform this analysis, we follow techniques used in the classical papers \cite{PPR} and \cite{RKKH}. For each of the 416 three-candidate profiles  from our real-world dataset, we created 1,000 pseudoprofiles by sampling ballots with replacement from the actual three-candidate profile. This sampling allowed us to generate 416,000 three-candidate elections which we can check for  reinforcement paradoxes.  

To keep the computation time reasonable and to match  the analysis in Section \ref{models_section}, if the actual profile contained more than 1,001 voters then for each generated pseudoprofile we sampled only 1,001 ballots; otherwise each pseudoprofile contained the same number of voters as the original election. Most of the elections in the dataset contain more than 1,001 voters and thus most of the generated elections contain 1,001 voters.

Table \ref{bootstrapping_results} shows the probability of each type of reinforcement paradox across the 416,000 pseudoprofiles. The results are similar to the ``3-Cand Reduced Elections'' probabilities in Table \ref{empirical_results}, and thus  the bootstrapping results do not meaningfully affect how frequently we expect such paradoxes to occur in real-world elections.

A more nuanced summary of the bootstrapping results can be seen   in  Figure \ref{bootstrap_histograms}. Each histogram illustrates the frequency of the proportion of pseudoprofiles generated by each original election that resulted in a ``paradoxical'' election. The top image is based on the 103 original elections that demonstrate the paradox. Thus, for example, the right-most bar on the top histogram indicates that in 41  of these elections, between 95\% and 100\% of the associated pseudoprofiles were susceptible to the paradox. Similarly, the second-to-the-right bar  indicates that in 13 of these elections, between 90 and 95\% of the associated pseudoprofiles were susceptible to the paradox.  The bottom histogram is constructed the same way for the 313 original three-candidate elections which were not susceptible to  the paradox with the exception that elections that generated zero ``paradoxical'' pseudoprofiles  are omitted for clarity (since the left bar would be out-sized).

From Figure \ref{bootstrap_histograms} (top), we can observe that roughly half of the original elections that were susceptible to a paradox were extremely likely to exhibit that paradox in the sense that more than 90\% of their associated pseudoprofiles also did so. Most of the remaining elections in this category also generated pseudoprofiles that  more often than not were susceptible to a reinforcement paradox. The reverse is true in Figure \ref{bootstrap_histograms} (bottom): the large majority of the original elections that were not susceptible to  the paradox generated few pseudoprofiles that were  susceptible to a reinforcement paradox.

Noteworthy, however, is the fact that across all 416 original elections, 241 generated at least one pseudoprofile which was susceptible to a reinforcement paradox. While the presence of a single such pseudoprofile should be interpreted with caution—since rare events may arise in large bootstrap samples—this nevertheless indicates that paradoxical outcomes can occur under perturbations of many of the observed elections. If we restrict the analysis to elections that did not contain a majority candidate, 224 of the 254 original elections generated at least one pseudoprofile that is susceptible to a reinforcement paradox. These results suggest that, although such outcomes may occur with low frequency in some cases, many three-candidate real-world elections without a majority candidate admit generated pseudoprofiles under which a reinforcement paradox can arise. We note that such a high percentage is not achieved under bootstrapping for other paradoxes. For example, in \cite{MW23} the same type of analysis was performed for the spoiler effect using a smaller dataset of American IRV elections (with 10,000 pseudoprofiles per election), and only five elections generated at least one pseudoprofile demonstrating a spoiler effect under IRV. Thus, it is not the case that performing bootstrap analysis guarantees that rare paradoxical outcomes will appear broadly across elections.

Finally, recall that in the 416 actual elections, any election that was susceptible to a plurality loser paradox was also susceptible to a runner-up paradox. This is not the case for pseudoprofiles: 1,277 generated elections are susceptible to a plurality loser but not to a runner-up paradox. Thus, while such outcomes are relatively rare, the bootstrapping analysis shows that they can arise under perturbations of real-world elections, even though they are not observed in the original data.

\begin{table}
\begin{tabular}{l|c|c|c|c}
&PL or RU & RU & PL & PL and RU\\
\hline
Unconditioned Prob.& 25.0\% &24.8\% & 6.8\% & 6.5\%\\
Conditioned Prob. & 42.3\% & 41.8\% &11.5\%   & 11.0\%\\

\end{tabular}
\caption{Probability, expressed as a percentage, of observing a reinforcement paradox in the 416,000 generated pseudoprofiles. The second row gives probabilities conditioned on the absence of a majority candidate.}
\label{bootstrapping_results}
\end{table}

\begin{figure}

\centering

\includegraphics[scale=0.5]{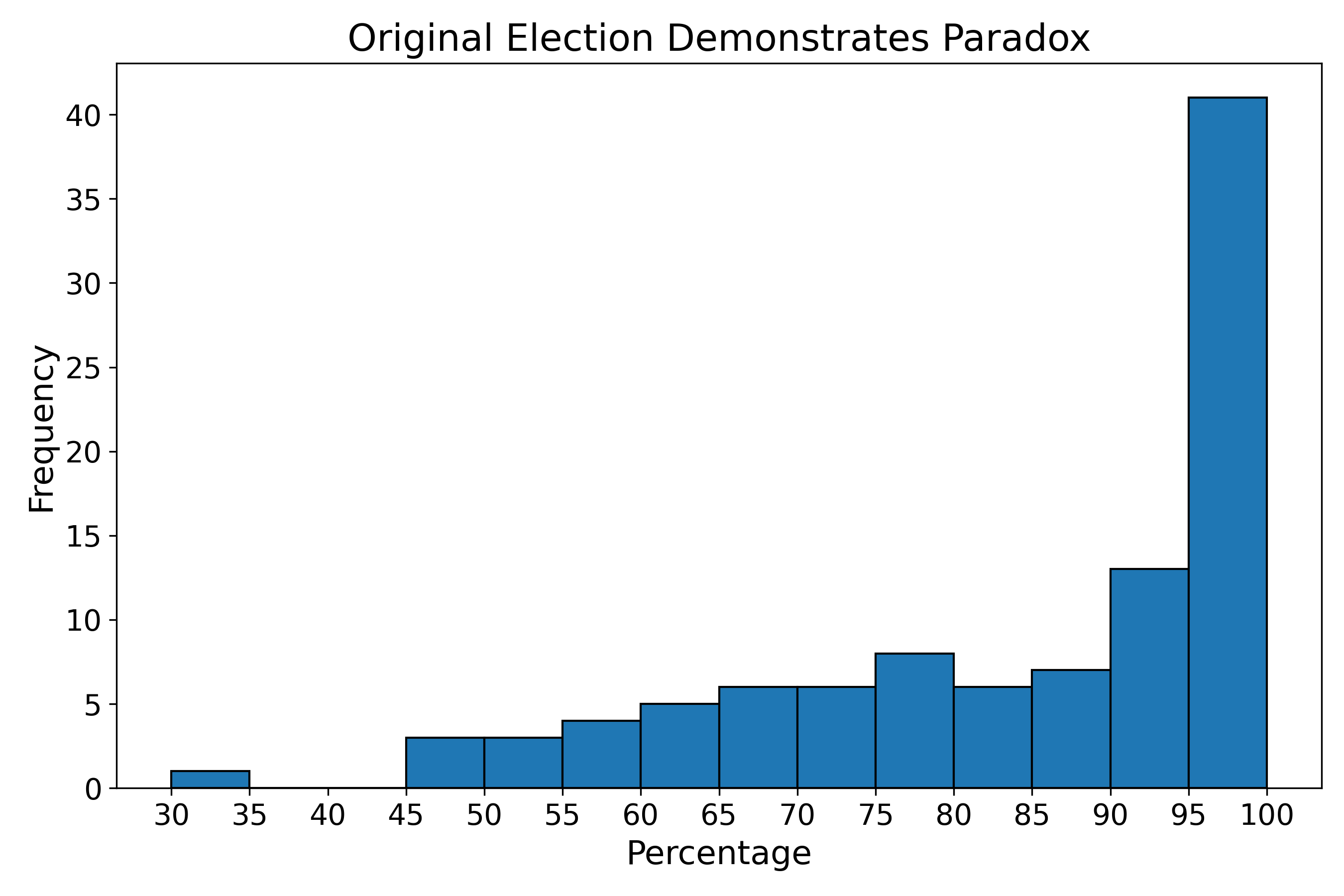}
\vspace{.1 in}

\includegraphics[scale=0.5]{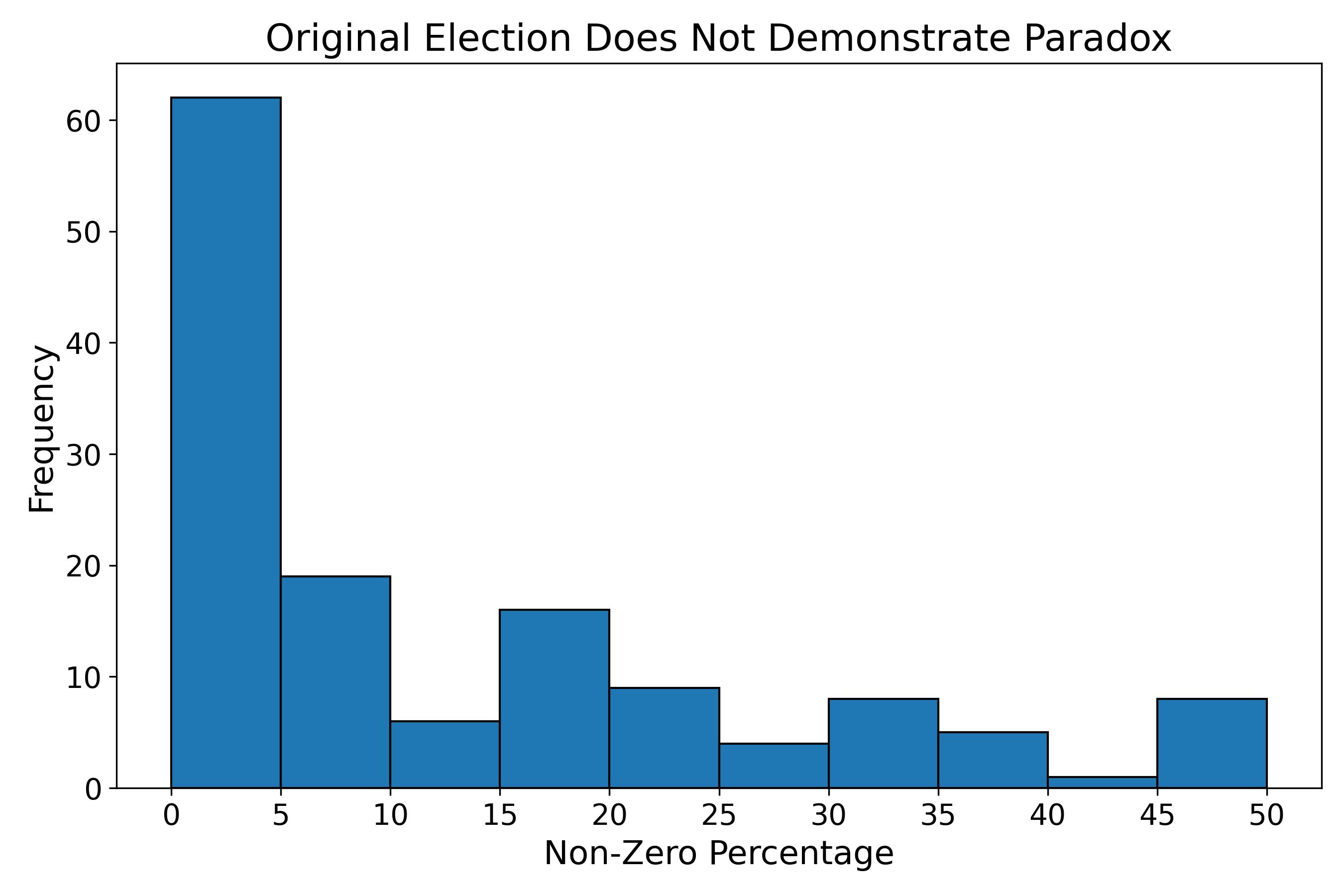}

\caption{(Top) Bootstrapping frequency results for elections that are susceptible to a reinforcement paradox. (Bottom) Bootstrapping frequency results for elections that are are not susceptible to a reinforcement paradox but generated at least one pseudoprofile that is.}
\label{bootstrap_histograms}
\end{figure}

\section{Discussion}\label{section:discussion}

In this section we discuss our results, starting with a comparison between the theoretical and empirical results, then placing our results into context by comparing them  to previous results on the reinforcement paradox. 

\subsection{Theoretical Versus Empirical Results} As is often the case in social choice, none of the theoretical models give probabilities which closely match what we observe in real-world elections across all four paradox types. Interestingly, if we focus on the 416 three-candidate reduced elections (lines two and four of Table \ref{empirical_results}), the IAC complete-ballot model (Table \ref{IAC_results}) provides a decent approximation of the empirical results.  Furthermore, even though the real-world elections contain many partial ballots, generally the complete-ballot IAC model more accurately reflects the empirical probabilities across the 416 reduced elections than the partial-ballot IAC model. For other voting paradoxes such as monotonicity paradoxes or the spoiler effect, the IAC model typically gives probabilities much higher than what we observe in real elections. If we focus only on the 132 actual three-candidate elections, a model like Dir$(0.6)$ gives the best approximation of the empirical unconditioned results (with the exception of ``PL and RU''), which makes sense because in these elections, the winner is stronger overall than in the Dir$(1)$ (IAC) model, consistent with the election data. Overall, the 1D-spatial models, both normal and bimodal,  underestimate the probabilities of one or more types of  paradox occurring. 

What none of the models capture is that in actual elections, it is extremely rare to observe a plurality loser paradox without also observing a runner-up paradox. Under spatial models this difference is especially pronounced, as under those models if a plurality loser paradox is exhibited then we almost never also observe a runner-up paradox.  This is generally true across all models, even though the runner-up (RU) paradox is roughly three times as likely to occur as the plurality loser (PL) paradox.

All theoretical models predict that if an election is somewhat ``competitive'' then the probability of demonstrating a reinforcement paradox is large (although usually not larger than 50\%). The empirical results also bear this out, with empirical probabilities sometimes being larger than theoretical ones, even when using theoretical models which often produce close elections. Of course, the empirical probabilities come nowhere close to the probability of 1 predicted by IC, the model which tends to produce extremely competitive elections.

\subsection{Comparison to Previous Reinforcement Paradox Results}

As mentioned in the Introduction, we are aware of only two previous papers \cite{CMMS10,PT14} which estimate the probability with which an election demonstrates a reinforcement paradox. Our work differs from theirs in several significant ways. Most importantly, in most cases the analysis in these earlier papers was based on simulating two separate elections, checking whether the elections had the same winner, and then indicating a reinforcement paradox had occurred if the winner obtained from the combined election was different from the winner in the two generated elections. That is, the authors did not start with one election and look for a partition of the ballots as we do. As Plassmann and Tideman state in \cite{PT14},  ``\emph{We determine the frequencies of the reinforcement paradox by sampling ballots for two separate elections, which we then combine to determine whether the paradox occurred for any voting rule. In contrast, we determine the frequencies of all other paradoxes from a single sample of ballots.}'' Plassmann and Tideman do not try to decompose one set of ballots into a partition as we do, stating  ``\emph{There is no obvious way in which to divide the ballots from a single election into two distinct sets that would yield the appropriate distribution of rankings in the two new sets and would permit us to examine reinforcement.}''  Courtin et al. \cite{CMMS10} generally take a similar approach, simulating two ballot sets and then combining them. We have contributed to the investigation of this paradox by identifying a method in the 3-candidate case, through Propositions \ref{complete_cond} and \ref{partial_cond},  to determine if  such a division is possible, without having to separately test each possible partition of a ballot set into two sub-elections.  We feel our approach is equally valid  since we are interested in the ``paradoxical harm'' felt by a candidate (and perhaps their supporters) in an election exhibiting the paradox: if  there exists a partition of the electorate such that one of the losing candidates can point to each sub-electorate and claim victory, that candidate can reasonably claim some sort of harm when they ultimately lose the overall election. 

A second way our work differs from previous studies is that we use a much larger set of theoretical models as well as  empirical data to conduct our analysis.  
Neither of the previous papers uses any real-world ranked-choice ballot data.\footnote{The paper by Courtin et al. does have a section where they apply their results to parliamentary (or House) elections in France, Germany, and the United States, but the authors do not have access to candidate preference rankings.} Our use of multiple spatial models, multiple Dirichlet models, and a large dataset of real-world elections (combined with nonparametric bootstrapping analysis) greatly expands previous work.

Thirdly, ours is the first work to provide any analysis of partial ballots in the context of reinforcement paradoxes. Previous work was not concerned with partial ballots, presumably because such work did not analyze any real-world IRV elections.

Finally, both previous papers report extremely small probabilities for observing a reinforcement paradox under IRV. Plassmann and Tideman give a probability of approximately 0.02\%  when using an electorate of at least 1,000 voters. Courtin et al. report similarly small probabilities. Using the term ``Young's reinforcement axiom'' (YRA) synonymously with Young's notion of consistency (Definition \ref{definition_consistency}), Courtin et al. state, ``Our results show that ... cases of violations of YRA are rather rare... $[$A$]$lthough all these $[$iterative scoring$]$ rules violate YRA, the frequencies of violation are very rare. Then, since the violation of this axiom is not so frequent, one should not be unduly worried about its theoretical possibility.'' By contrast, we tend to report  large reinforcement paradox probabilities, which are generally much greater than the probabilities of observing other classical paradoxes. The reason for this discrepancy is our difference in approach, as outlined above. We would most likely obtain similarly small probabilities if, for example, we were to look at each real election and search for a reinforcement paradox by choosing a partition of the ballots at random. Our probabilities are higher because we  consider all possible partitions.

\section{Conclusion}\label{conclusion}

Both theoretical and empirical analysis suggests that three-candidate IRV elections are highly susceptible to the reinforcement paradox if we allow for arbitrary partitions. If we  require the two sub-elections to be similarly-sized, for instance,  there will  likely be fewer instances of the paradox; the same is true if  we restrict the sub-elections so that one sub-election is a small perturbation of the original election. For elections where we have additional geographic information, the frequency of paradoxes drops significantly. Future work could examine how the probability of a reinforcement paradox  changes under restrictions of this kind, or to instances involving a larger number of sub-elections. Since reinforcement paradoxes are most serious when the electorate can be partitioned according to some natural variable (geography, race, education level, etc.), focus on this kind of reinforcement paradox is a natural next step. Any real-world investigation of this kind is likely to be limited by a lack of data, however.

Future work could also investigate the structure of the partitions which demonstrate a paradox. When an election is susceptible to a paradox, it can typically be demonstrated using many different partitions. The structure of the space of paradox-producing partitions is therefore worth exploring. For example, how ``similar'' can $\mathcal{B}_1$ and $\mathcal{B}_2$ be made to each other, or how similar can $\mathcal{B}_1$ be made to $\mathcal{B}$? Courtin et al. \cite{CMMS10} and Plassmann and Tideman \cite {PT14} generate $\mathcal{B}_1$ and $\mathcal{B}_2$ using the same probability distribution; do the two ballot sets in real-world partitions seem to come from the same distribution? We could use notions of similarity adapted from \cite{FS19,SB25} to study where $\mathcal{B}_1$ and $\mathcal{B}_2$ lie in a ``map of elections,'' either relative to each other or to $\mathcal{B}$. At the very least, our results highlight how non-convex this landscape can be.


Finally, we focus on IRV in this article because of its use in real-world political elections, but it would be interesting to repeat our analysis for other voting methods which are not positional scoring rules.  For example, many researchers in the social choice space feel that Condorcet consistent methods are superior to IRV (see Chapter 2 of \cite{DFP}, for example), and thus we would welcome an examination of the susceptibility of Condorcet consistent methods to the reinforcement paradox. 

\section*{Acknowledgements} The authors thank the editor and reviewers for comments which greatly improved the article.

\clearpage

\section*{Appendix: District and Precinct Maps}

\begin{figure}[h]
\includegraphics[width=140mm,height=95mm]{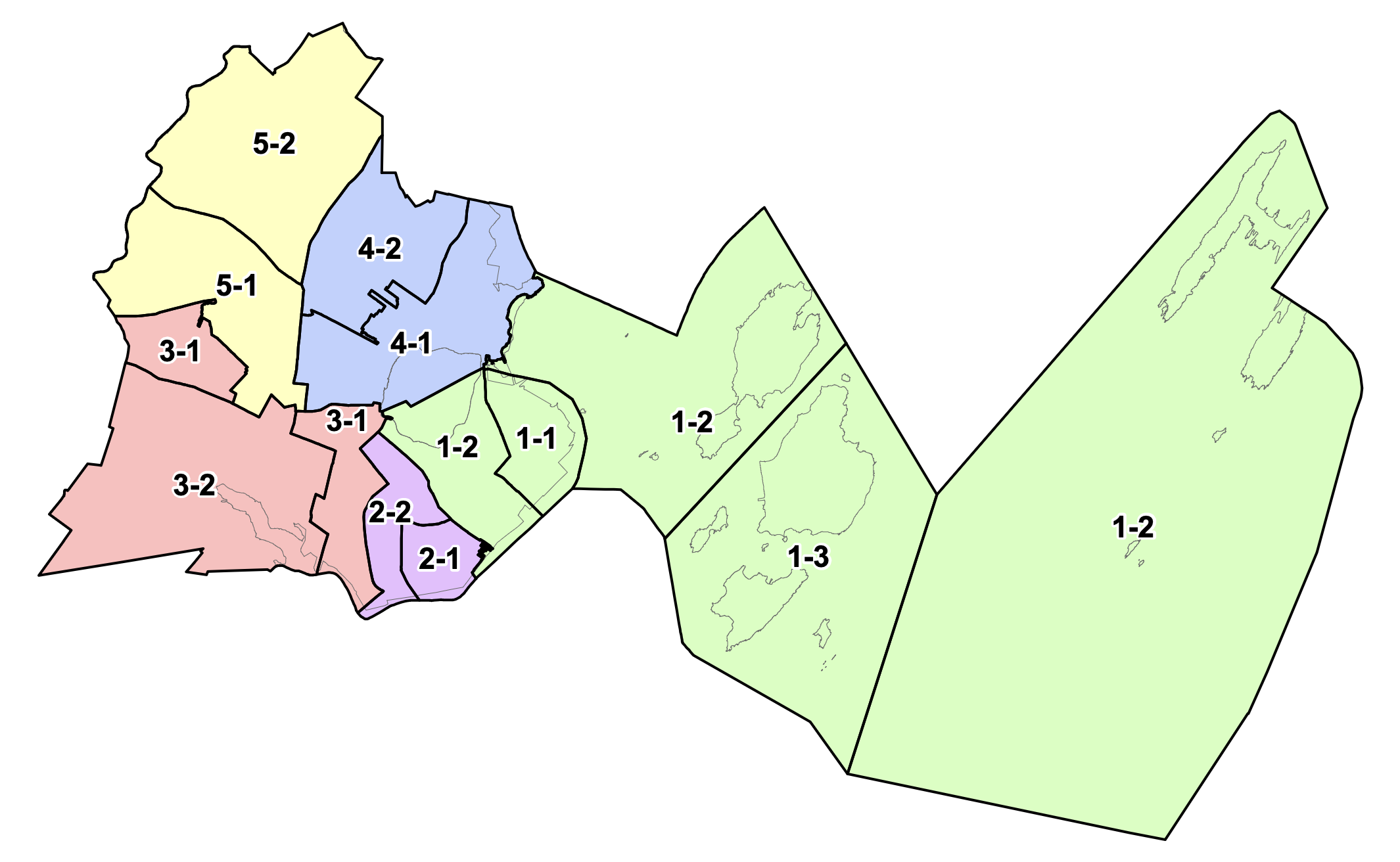}
\caption{A map of the precincts used for the 2021 at-large commissioner election in Portland, ME. Map available at \protect\url{https://www.portlandmaine.gov/292/PDF-Maps}. Accessed 4/1/2026.}
\label{figure:portland_ME}
\end{figure}

\clearpage
\begin{figure}
\centering
\includegraphics[width=115mm]{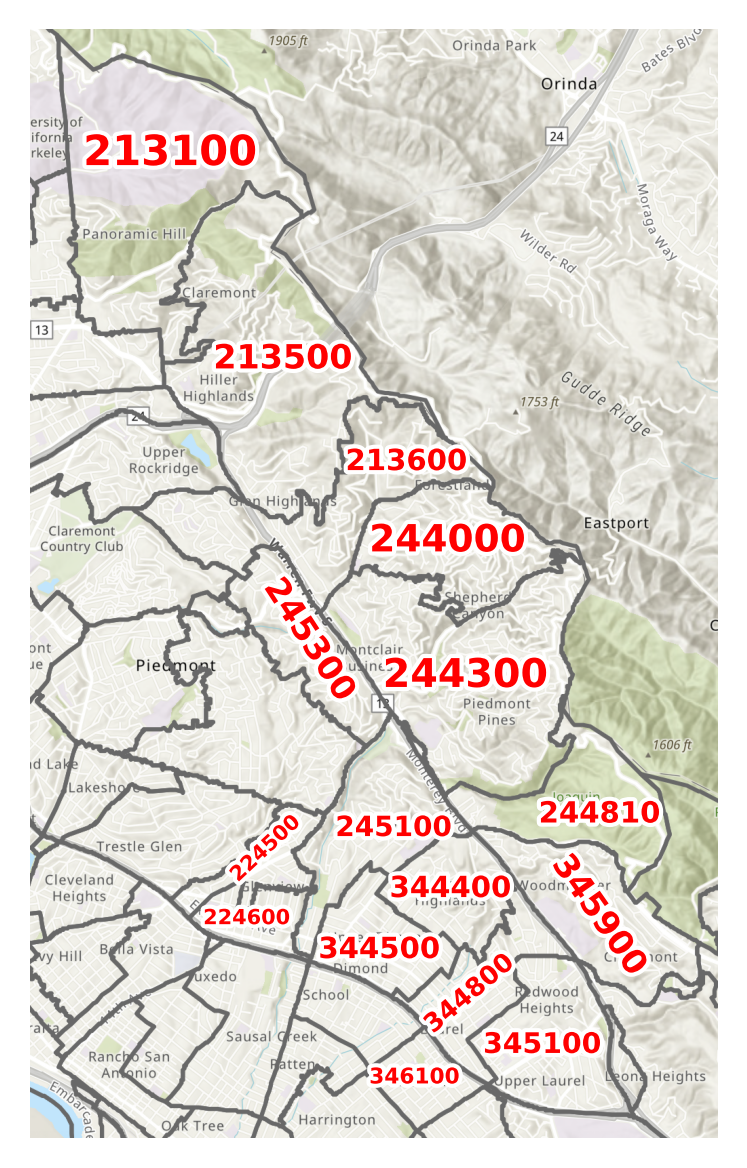}
\caption{A map of the precincts used for the 2022 District 4 Oakland School Director election. Map available at \protect\url{https://experience.arcgis.com/experience/028c197955a94880a2d6b87948d9f2f0/}. Precinct labels added by us. Accessed 4/1/2026.}
\label{figure:Oakland_D4}
\end{figure}

\begin{figure}[h]
\centering
\includegraphics[width=125mm]{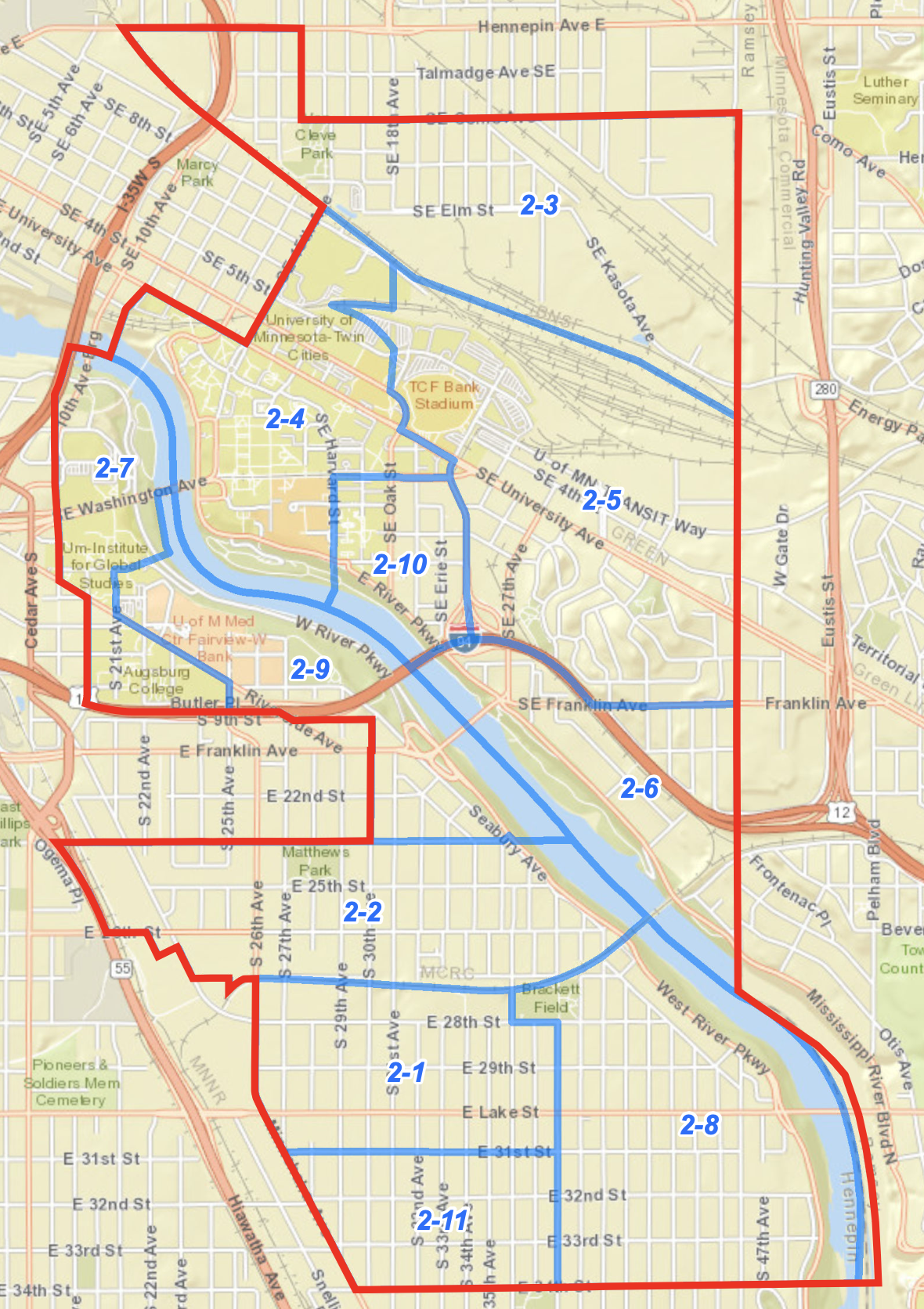}
\caption{A map of the precincts used for the 2021 Ward 2 Minneapolis City Council election. Map provided by the Minneapolis Office of the City Clerk.}
\label{figure:Minneapolis_Ward2}
\end{figure}

\begin{figure}[h]
\centering
\includegraphics[width=120mm,height=75mm]{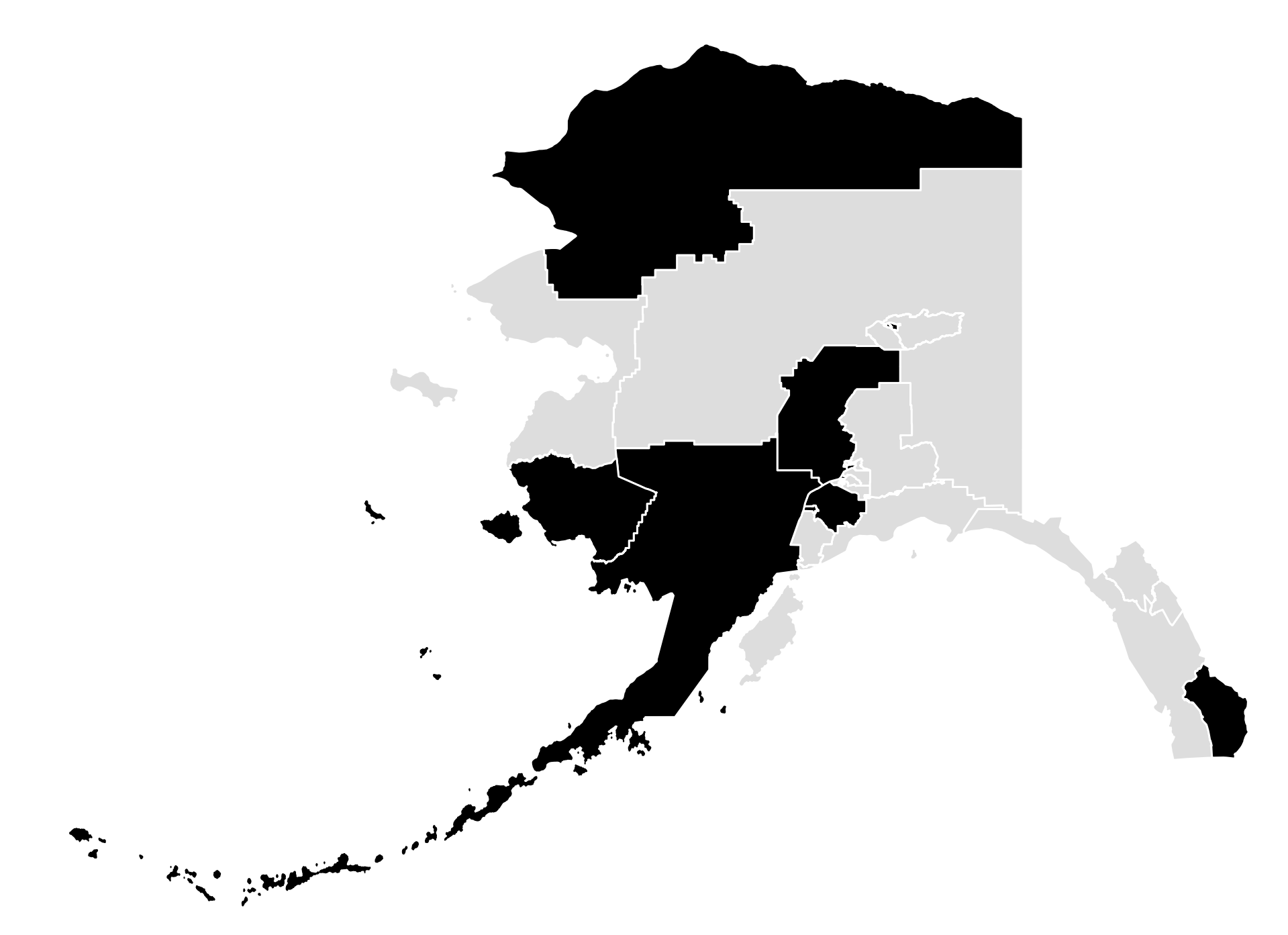}

\includegraphics[width=120mm,height=75mm]{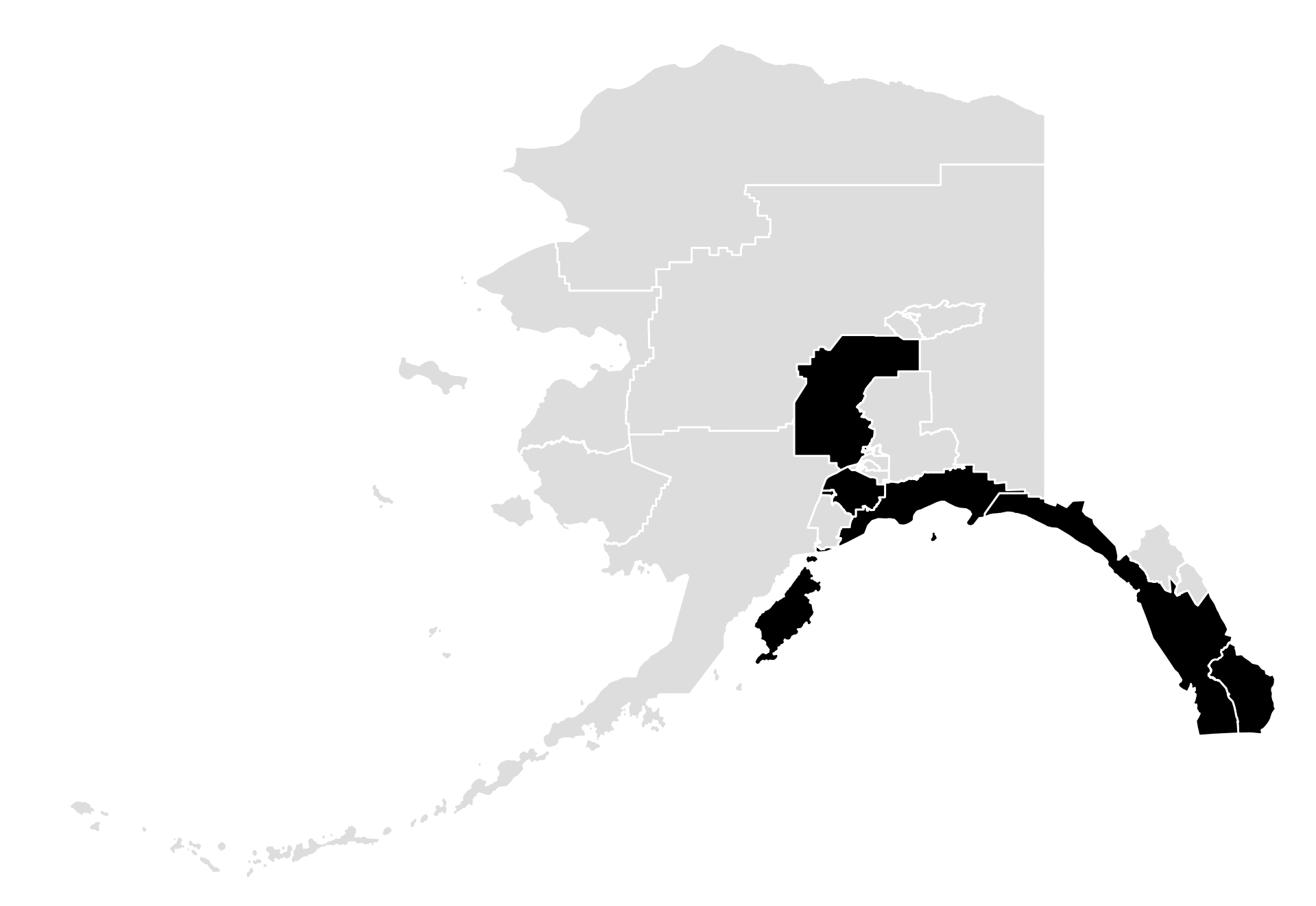}
\caption{(Top) A map showing a partition of state House districts in Alaska, corresponding to the partition in Table \ref{table:AK_partition} demonstrating a plurality loser paradox. Plurality loser Nick Begich wins when using ballots from the black districts plus overseas ballots, and also wins when using ballots from grey districts. The shapefile used to create this map is provided by the Alaska Division of Elections at \protect\url{https://www.akredistrict.org/2022-may-interim-proclamation/}. Accessed 4/9/26. (Bottom) The ``cleanest'' geographic partition we could find. Begich wins the black districts and also wins the gray districts, where we include overseas ballots in the gray.}
\label{figure:AK_map}
\end{figure}
\end{document}